\newcommand{\old}[1]{{}}
\newtheorem{theorem}{Theorem}[section]
\newtheorem{corollary}[theorem]{Corollary}
\newtheorem{lemma}[theorem]{Lemma}
\newtheorem{claim}[theorem]{Claim}
\title{Bottleneck Non-Crossing Matching in the Plane\thanks{Work by A.K. Abu-Affash was partially supported by a fellowship for doctoral students from the Planning \& Budgeting Committee of the Israel Council for Higher Education, and by a scholarship for advanced studies from the Israel Ministry of Science and Technology. Work by A.K. Abu-Affash and Y. Trabelsi was partially supported by the Lynn and William Frankel Center for Computer Sciences. Work by P. Carmi was partially supported by grant 2240-2100.6/2009 from the German Israeli Foundation for scientific research and development, and by grant 87212211 from the Israel Science Foundation.
Work by M. Katz was partially supported by grant 1045/10 from the Israel Science Foundation. Work by M. Katz and P. Carmi was partially supported by grant 2010074 from the United States -- Israel Binational Science Foundation.
}}
\author{A. Karim Abu-Affash \ \ \ Paz Carmi \ \ \ Matthew J. Katz \ \ \ Yohai Trabelsi
\\
\\
{\small Department of Computer Science, Ben-Gurion University, Israel} \\
{\small {\tt $\{$abuaffas,carmip,matya,yohayt$\}$@cs.bgu.ac.il}}}
\author{A. Karim Abu-Affash\thanks{Department of Computer Science, Ben-Gurion University of the Negev, Beer-Sheva 84105, Israel,
{\tt abuaffas@cs.bgu.ac.il}.} 
\and Paz Carmi\thanks{Department of Computer Science, Ben-Gurion University of the Negev, Beer-Sheva 84105, Israel, {\tt carmip@cs.bgu.ac.il}.} 
\and Matthew J. Katz\thanks{Department of Computer Science, Ben-Gurion University of the Negev, Beer-Sheva 84105, Israel, {\tt matya@cs.bgu.ac.il}.} 
\and Yohai Trabelsi\thanks{Department of Computer Science, Ben-Gurion University of the Negev, Beer-Sheva 84105, Israel, {\tt yohayt@cs.bgu.ac.il}.}
}
\begin{document}
\maketitle

\begin{abstract}

Let $P$ be a set of $2n$ points in the plane, and let $M_{\rm C}$ (resp., $M_{\rm NC}$) denote a bottleneck matching (resp., a bottleneck non-crossing matching) of $P$. We study the problem of computing $M_{\rm NC}$.
We first prove that the problem is NP-hard and does not admit a PTAS. Then, we present an $O(n^{1.5}\log^{0.5} n)$-time algorithm
that computes a non-crossing matching $M$ of $P$, such that $bn(M) \le 2\sqrt{10} \cdot bn(M_{\rm NC})$,
where $bn(M)$ is the length of a longest edge in $M$. An interesting implication of our construction is that
$bn(M_{\rm NC})/bn(M_{\rm C}) \le 2\sqrt{10}$.
\end{abstract}
\newpage
\section{Introduction}

Let $P$ be a set of $2n$ points in the plane. A {\em perfect matching} $M$ of $P$ is a perfect matching in the complete Euclidean graph induced by $P$. Let $bn(M)$ denote the length of a longest edge of $M$. A {\em bottleneck matching} $M_{\rm C}$ of $P$ is a perfect matching of $P$ that minimizes $bn(\cdot)$. A {\em non-crossing matching} of $P$ is a perfect matching whose edges are pairwise disjoint.
In this paper, we study the problem of computing a bottleneck non-crossing matching of $P$; that is, a non-crossing matching $M_{\rm NC}$ of $P$ that minimizes $bn(\cdot)$, where only non-crossing matchings of $P$ are being considered.

The non-crossing requirement is quite natural, and indeed many researches have considered geometric problems dealing with crossing-free configurations in the plane; see, e.g.~\cite{Aichholzer09,Aichholzer08,Alon93,Aloupis10,Rappaport02}.
In particular, (bottleneck) non-crossing matching is especially important in the context of layout of VLSI circuits~\cite{Lengauer90} and operations research.
It is easy to see that there always exists a non-crossing matching of $P$ (e.g., match each point with the first point to its right). Actually, any minimum weight matching of $P$ is non-crossing. However, as shown in Figure~\ref{fig:int1}, which is borrowed from~\cite{Carlsson10}, the length of a longest edge of a minimum weight matching can be much larger than that of a bottleneck non-crossing matching.

\begin{figure}[htp]
    \centering
        \includegraphics[width=.9\textwidth]{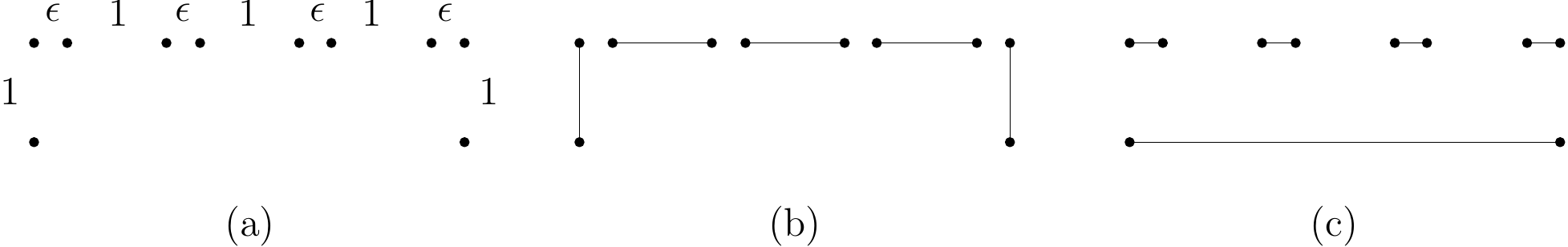}
    \caption{(a) The set $P$. (b) A bottleneck non-crossing matching $M_1$ of $P$. (c) A minimum weight matching $M_2$ of $P$. Notice that $bn(M_2)/bn(M_1) \rightarrow n-2$.}
    \label{fig:int1}
\end{figure}

\subsection{Related work}
Matching problems play an important role in graph theory, and thus have been studied extensively, see~\cite{Lovasz86}.
The various matching algorithms developed for general weighted graphs of course apply in our setting. However, it turns out that one can do better in the case of points in the plane. Vaidya~\cite{Vaidya88} presented an $O(n^{5/2}\log^4 n)$-time algorithm for computing a minimum weight matching, based on Edmonds' $O(n^3)$ algorithm. Subsequently, Varadarajan~\cite{Vara98} described an $O(n^{3/2}\log^5 n)$-time algorithm for this problem. For the bipartite version, Vaidya~\cite{Vaidya88} presented an $O(n^{5/2}\log n)$-time algorithm and Agarwal et al.~\cite{AES99} presented an $O(n^{2+\varepsilon})$-time algorithm; both algorithms are based on the Hungarian method~\cite{Lovasz86}. As for bottleneck matching, Chang et al.~\cite{CTL92} obtained an $O(n^{3/2}\log^{1/2} n)$-time algorithm for computing a bottleneck matching, by proving that such a matching is contained in the 17RNG. Efrat and Katz~\cite{EK00} extended this result to higher dimensions. For the bipartite version, Efrat et al.~\cite{Efrat01} presented an $O(n^{3/2} \log n)$-time algorithm.
Algorithms for other kinds of matchings, as well as approximation algorithms for the problems above, have also been developed.

Self-crossing configurations are often undesirable and might even imply an error condition; for example, a potential collision between moving objects, or inconsistency in the layout of a circuit. Many of the structures studied in computational geometry are non-crossing,
for instance, minimum spanning tree, minimum weight matching, Voronoi diagram, etc. Jansen and Woeginger~\cite{JW93} proved that
deciding whether there exists a non-crossing matching of a set of points with integer coordinates, such that all edges are of length {\em exactly} $d$, for a given integer $d \ge 2$, is NP-complete. Carlsson and Armbruster~\cite{Carlsson10} proved that the bipartite version of the bottleneck non-crossing matching problem is NP-hard. Alon et al.~\cite{Alon93} considered the problem
of computing the longest (i.e., maximum weight) non-crossing matching of a set of points in the plane. They presented an approximation algorithm
that computes a non-crossing matching of length at least $2/\pi$ of the length of the longest non-crossing matching.
Aloupis et al.~\cite{Aloupis10} considered the problem of finding a non-crossing matching between points and geometric objects in the plane.
See also~\cite{Aichholzer09, Aichholzer08, Rappaport02} for results related to non-crossing matching.

\subsection{Our results}
We begin by proving (in Section~2) that the problem of computing $M_{\rm NC}$ is NP-hard. Our proof is based on a reduction from the planar 3-SAT problem, and is influenced by the proof of Carlsson and Armbruster mentioned above. As a corollary we obtain that the problem does not admit a PTAS.
Next, in Section~3, we present an algorithm for converting any (crossing) matching $M_\times$ into a non-crossing matching $M_=$, such that
$bn(M_=) \le 2\sqrt{10} \cdot bn(M_\times)$. The algorithm consists of two stages: converting $M_\times$ into an intermediate (crossing) matching $M'_\times$ with some desirable properties, and using $M'_\times$ as a ``template'' for the construction of $M_=$.
The algorithm implies that (i) $M_{\rm NC}/M_{\rm C} \le 2\sqrt{10}$, and (ii) one can compute, in $O(n^{3/2}\log^{1/2} n)$-time, a non-crossing matching $M$, such that $bn(M) \le 2\sqrt{10} \cdot bn(M_{\rm NC})$. We are not aware of any previous constant-factor approximation algorithm for the problem of computing $M_{\rm NC}$. In the full version of this paper, we also present an $O(n^3)$-algorithm, based on dynamic programming, for computing $M_{\rm NC}$ when the points of $P$ are in convex position.

\section{Hardness Proof}\label{sec:Sec2}

In this section, we prove the following theorem.
Our proof is influenced by the proof of Carlsson and Armbruster for the bipartite version~\cite{Carlsson10}.

\begin{theorem}
Let $P$ be a set of $2n$ points in the plane. Then, computing a bottleneck non-crossing matching of $P$ is NP-hard.
\end{theorem}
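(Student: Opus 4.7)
The plan is to reduce from Planar 3-SAT, whose NP-hardness is classical, following the same high-level gadget-based template used by Carlsson and Armbruster for the bipartite version, but adapted so that the pieces work for the unrestricted (monochromatic) setting. Given a planar 3-SAT formula $\varphi$ with an embedding of its variable--clause incidence graph in the plane, I would construct in polynomial time a point set $P = P(\varphi)$ together with a threshold $d > 0$ such that $\varphi$ is satisfiable if and only if $P$ admits a non-crossing perfect matching $M$ with $bn(M) \le d$.

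The core of the construction is three types of gadgets. First, a \emph{wire gadget}: a long, thin rectangular strip of $2k$ points whose near-neighbor structure admits, within bottleneck $d$, essentially only two non-crossing matchings, namely the two ``parity'' matchings that pair consecutive points along the strip. These two states will encode the Boolean values \textsc{true} and \textsc{false}, and each wire's state is forced to propagate along the entire strip because mixing parities forces either a crossing or a long edge. Second, a \emph{variable gadget}: a closed loop of points, topologically a cycle of even length, placed around the location of a variable in the planar embedding. The two parity matchings of this loop correspond to the two truth values of the variable, and the loop's orientation dictates which parity is seen by each outgoing wire. Third, a \emph{clause gadget}: a small configuration at the meeting point of the three incident wires such that it admits a non-crossing completion with bottleneck $\le d$ if and only if at least one of the three wires enters it in the \textsc{true} parity. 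The last gadget is the most delicate: one typically places one or a few ``leftover'' points whose only possible short, non-crossing partner lies on whichever of the three wires is free to donate one of its endpoints, and this only happens when the incoming parity is the satisfying one. Bends in the wires, needed to follow the planar embedding, are handled by auxiliary turn gadgets that again enforce parity preservation within bottleneck $d$.

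Correctness has two directions. For the ``if'' direction, given a satisfying assignment, choose the corresponding parity on every wire and variable loop; each clause gadget can then be closed by matching the leftover points to the donor wire that is satisfying it; a careful check of geometric placement, using the fact that gadgets are laid out along a planar embedding of the incidence graph, shows that the resulting matching is non-crossing and has bottleneck at most $d$. For the ``only if'' direction, any non-crossing matching with bottleneck $\le d$ must, in each wire and loop, pick one of the two parity matchings (otherwise some edge exceeds $d$ or two edges cross within the strip); these parities give a consistent truth assignment to the variables, and the fact that each clause gadget was successfully completed forces the corresponding clause to contain a \textsc{true} literal. Since planar 3-SAT is NP-hard and the reduction is polynomial (the number of points and the bit-precision of their coordinates are polynomial in $|\varphi|$), the theorem follows; moreover, because satisfiable instances yield $bn \le d$ and unsatisfiable ones yield $bn \ge d'$ for some $d' > d$ that is bounded away from $d$ by a constant (determined by the gadget geometry), we obtain a constant inapproximability gap and hence no PTAS.

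The step I expect to be the main obstacle is the clause gadget: making sure that (i) it has a non-crossing completion under $d$ exactly when at least one incident wire is \textsc{true}, (ii) it never lets the non-crossing constraint ``leak'' a free matching that would simulate satisfaction when no literal is, and (iii) its geometric footprint is small enough to fit at each clause vertex of the planar embedding without conflicting with neighbouring wires. Unlike the bipartite version, here any point may be matched to any other point of $P$, so the gadget must exclude ``shortcut'' matchings between points of the clause and points of other gadgets purely through geometric distance and the non-crossing topology, which is the part requiring the most careful case analysis.
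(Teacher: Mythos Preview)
Your proposal is correct and follows essentially the same approach as the paper: a reduction from planar 3-SAT using even-length cycles for variables (two parity matchings encoding truth values), even-length paths as wires, and a clause gadget that admits a bottleneck-$d$ non-crossing completion iff at least one incoming literal is true. The paper's concrete clause gadget is a hexagonal ring through which each of the three wires passes with six interior points, together with extra ``blocking'' point pairs at the wire/hexagon junctions to rule out cross-gadget shortcuts---precisely the delicate construction you correctly anticipated as the main obstacle.
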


\begin{proof}[\bf{\textit{Proof:}}]
The proof is based on a reduction from the planar 3-SAT problem.
Given a 3-CNF formula $F$ with $n$ variables $X=\{x_1,x_2,\ldots,x_n\}$ and $m$ clauses $Y=\{C_1,C_2,\ldots,C_m\}$,
let $G_F=(V,E)$ be the graph of $F$, i.e., $V=X \cup Y$ and $E=\{(x_i,C_j):x_i \mbox{ appears in } C_j \mbox{ (either negated or unnegated)}\}$.
If $G_F$ is planar,
then $F$ is called a planar 3-CNF formula. The planar 3-SAT problem is to determine whether a given planar 3-CNF formula $F$ is satisfiable; the problem is NP-complete~\cite{Lichtenstein82}.

Let $F$ be a planar 3-SAT formula. We construct, in polynomial time, a set $P$ of points in the plane, such that $F$ is satisfiable if and only if there exists a non-crossing matching of $P$ with bottleneck 1.
Consider the graph $G_F$. It is well known that $G_F$ can be embedded in the plane in polynomial time.

\begin{figure}[htp]
    \centering
        \includegraphics[width=.5\textwidth]{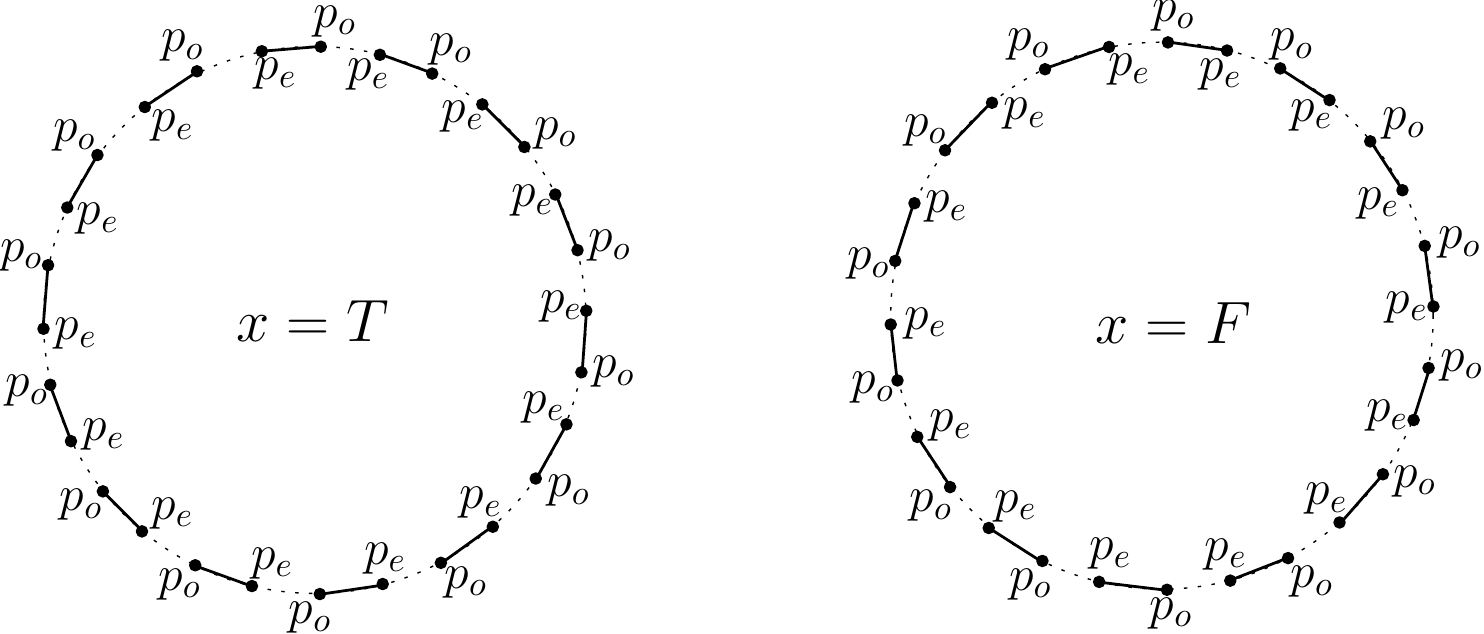}
    \caption{The true and the false matchings of a variable $x$.}
    \label{fig:reduction-1}
\end{figure}

{\bf Variables.} Each variable $x_i\in X$ is mapped to a circuit $s_i$ of an even number of points, where the distance between any two adjacent points is 1. We mark the points of $s_i$ alternately by $p_o$ and $p_e$. Each circuit $s_i$ can be partitioned into pairs of adjacent points in two ways. We arbitrarily associate one of them with the assignment $x_i=T$ and call it the ``true matching'', and the other with the assignment $x_i=F$ and call it the ``false matching''; see Figure~\ref{fig:reduction-1}. Thus, the value of $x_i$ will determine the matching on $s_i$, and vice versa.

\begin{figure}[htp]
    \centering
        \includegraphics[width=.23\textwidth]{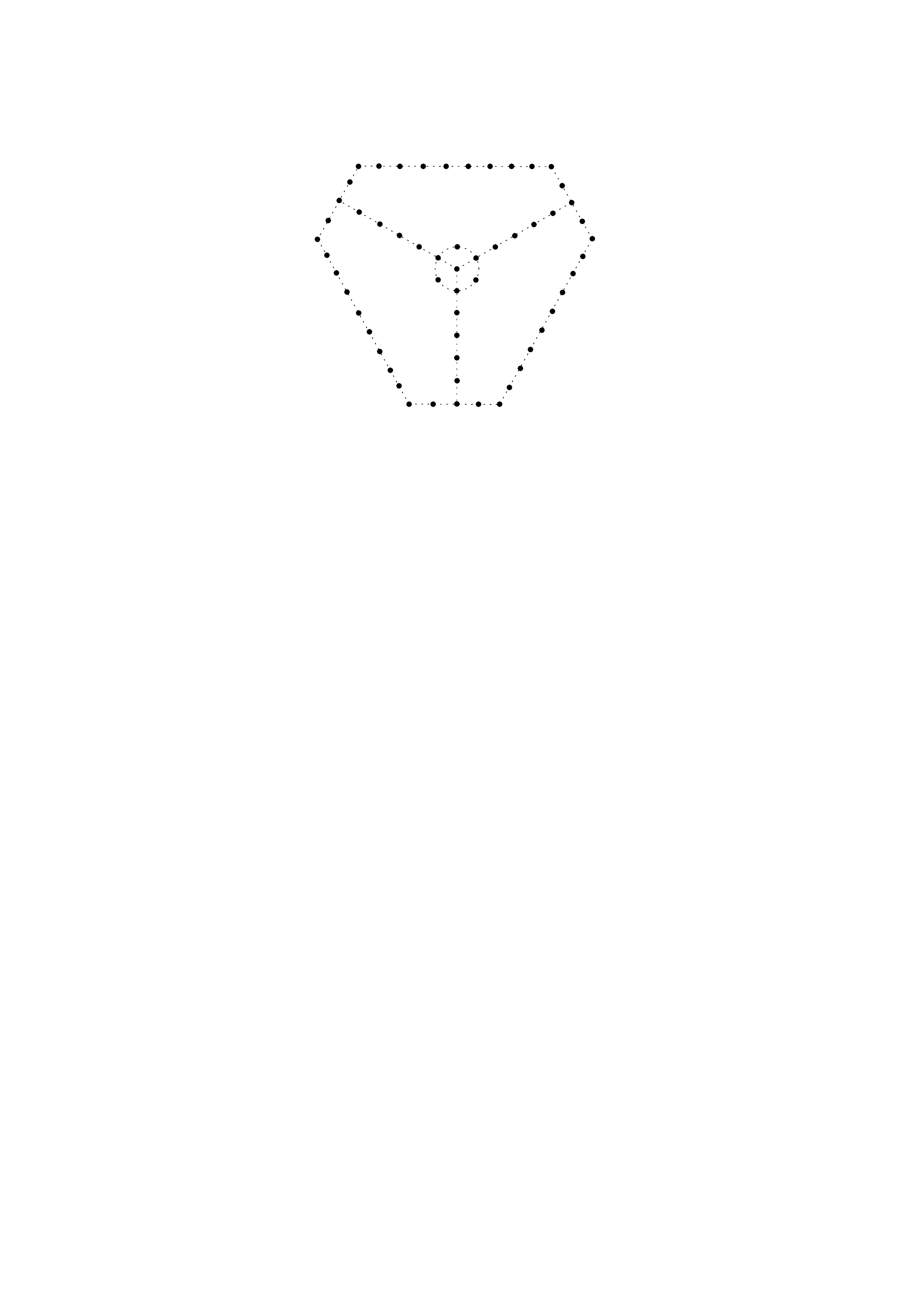}
    \caption{The hexagon $h_j$ corresponding to the clause $C_j$.}
    \label{fig:reduction-2}
\end{figure}

{\bf Clauses.} Each clause $C_j \in Y$ is mapped to a hexagonal component $h_j$ of points, where the distance between any two adjacent points is 1, as shown in Figure~\ref{fig:reduction-2}.

{\bf Edges.} Each edge $(x_i,C_j)$ between a variable $x_i$ and a clause $C_j$ is mapped to a path $l_{i,j}$ of an even number of points, where the distance between any two adjacent points is 1, that begins and ends at two different points of $s_i$ and intersects the hexagon $h_j$, as described below.

\begin{figure}[htp]
    \centering
        \includegraphics[width=.80\textwidth]{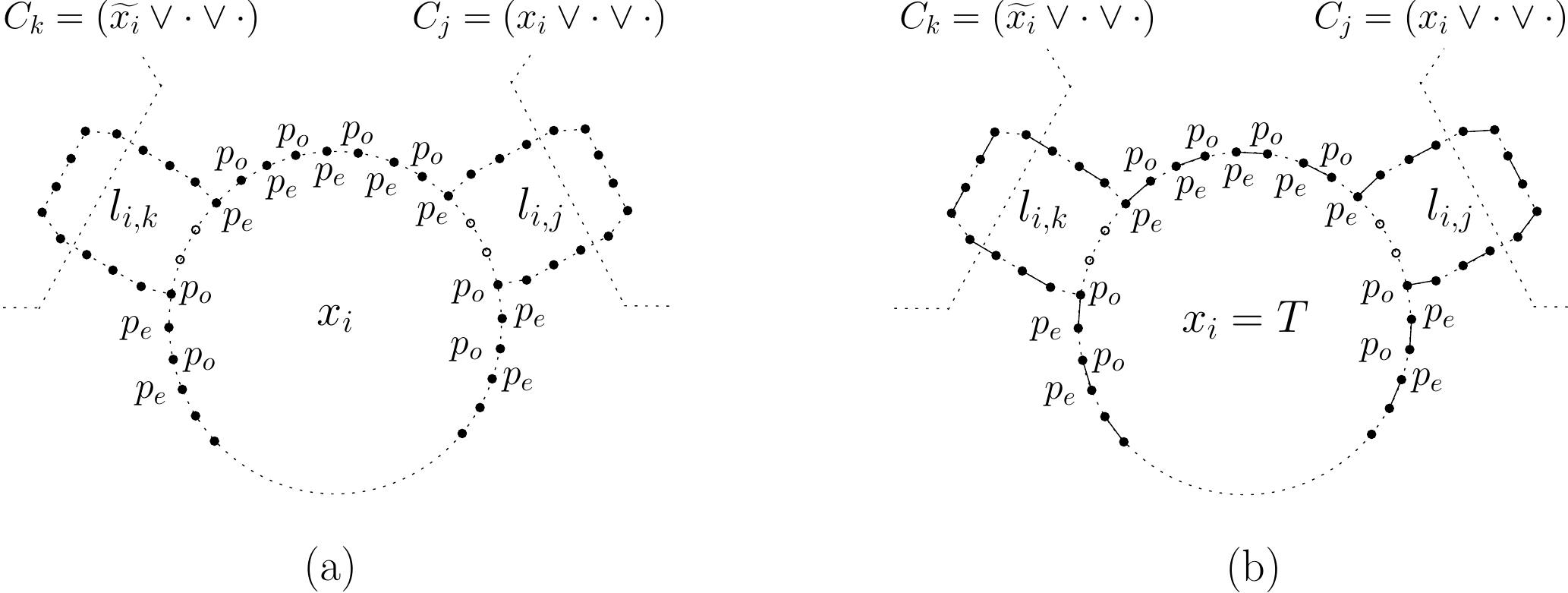}
    \caption{The paths between $x_i$ and the clauses $C_j$ and $C_k$. Since $x_i$ appears unnegated in $C_j$ and negated in $C_k$, $l_{i,j}$ begins at $p_e$ and ends at $p_o$, and $l_{i,k}$ begins at $p_o$ and ends at $p_e$.}
    \label{fig:reduction-3}
\end{figure}

Assume that $x_i$ appears unnegated in clause $C_j$ and negated in clause $C_k$, i.e., $C_j=(x_i\vee\cdot\vee\cdot)$ and $C_k=(\widetilde{x_i}\vee\cdot\vee\cdot)$. Consider the points of $s_i$ in clockwise order. For the unnegated instance, we select 4 consecutive points $p_e,p_o,p_e,p_o$ of $s_i$, connect $l_{i,j}$ to the first and last of these points, and remove the middle two points; see Figure~\ref{fig:reduction-3}(a). And, for the negated instance, we select 4 consecutive points $p_o,p_e,p_o,p_e$, connect $l_{i,k}$ to the first and last of these points, and remove the middle two points.
Notice that, as for circuits, a path can be partitioned into pairs of adjacent points in two ways. The value of $x_i$ and whether or not it appears negated in the clause will determine which of these two matchings will be the matching on the path, see below.

\begin{figure}[htp]
    \centering
        \includegraphics[width=.45\textwidth]{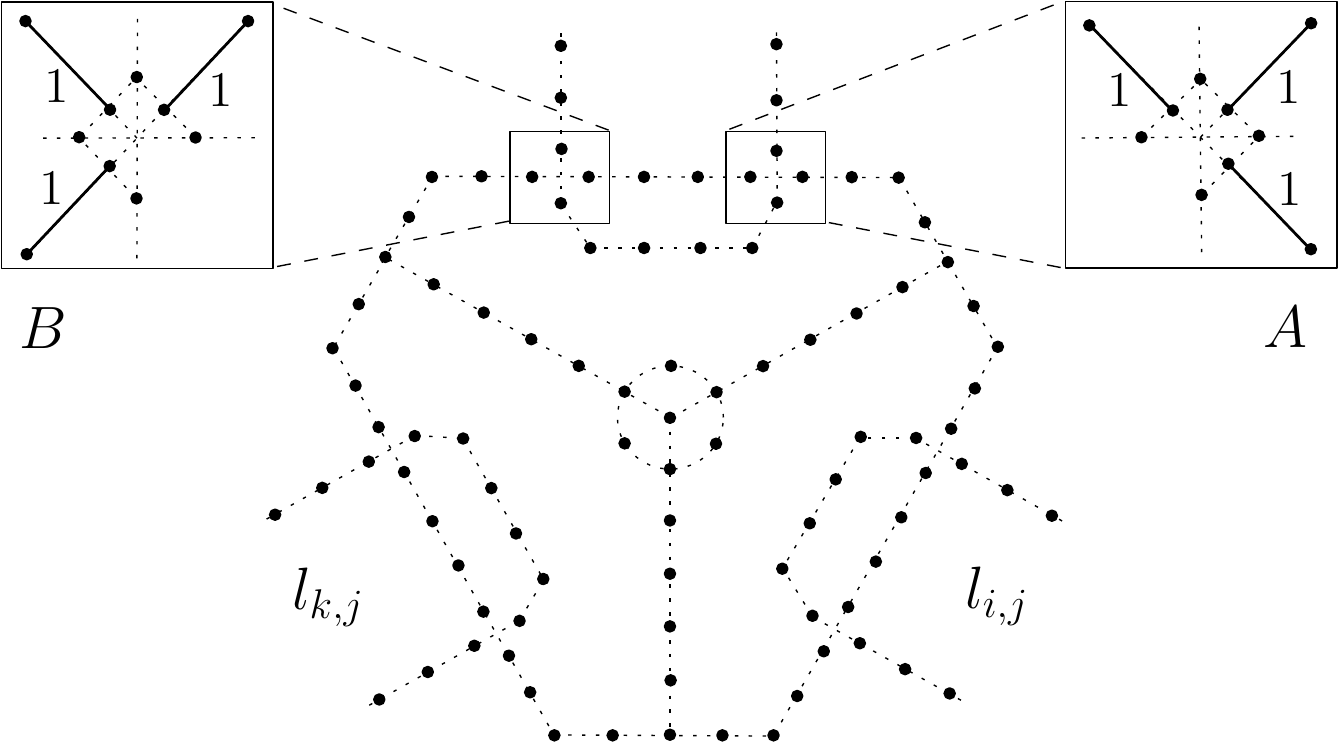}
    \caption{The intersection between $C_j$ and the variables appearing in it.}
    \label{fig:reduction-4}
\end{figure}

The intersection between $l_{i,j}$ and the hexagon $h_j$ representing $C_j$ is shown in Figure~\ref{fig:reduction-4}. $l_{i,j}$ intersects $h_j$ such that six of its points lie inside $h_j$ and the other lie outside $h_j$.
We would like to prevent situations where a point of $l_{i,j}$ is matched to a point of $h_j$.
To this end, we add 3 pairs of points around each of the two junctions involving $l_{i,j}$ and $h_j$, see squares $A$ and $B$ in Figure~\ref{fig:reduction-4}. This forces the points of $l_{i,j}$ to be matched to each other.

Notice that in the true (resp., false) matching of $s_i$, if $x_i$ appears unnegated (resp., negated) in $C_j$, then the six points of $l_{i,j}$ in $h_j$ are matched to each other (see Figure~\ref{fig:reduction-3}(b)), and, if $x_i$ appears negated (resp., unnegated) in $C_j$, then the two extreme points among these six points are matched to points outside $h_j$.

\begin{figure}[htp]
    \centering
        \includegraphics[width=.70\textwidth]{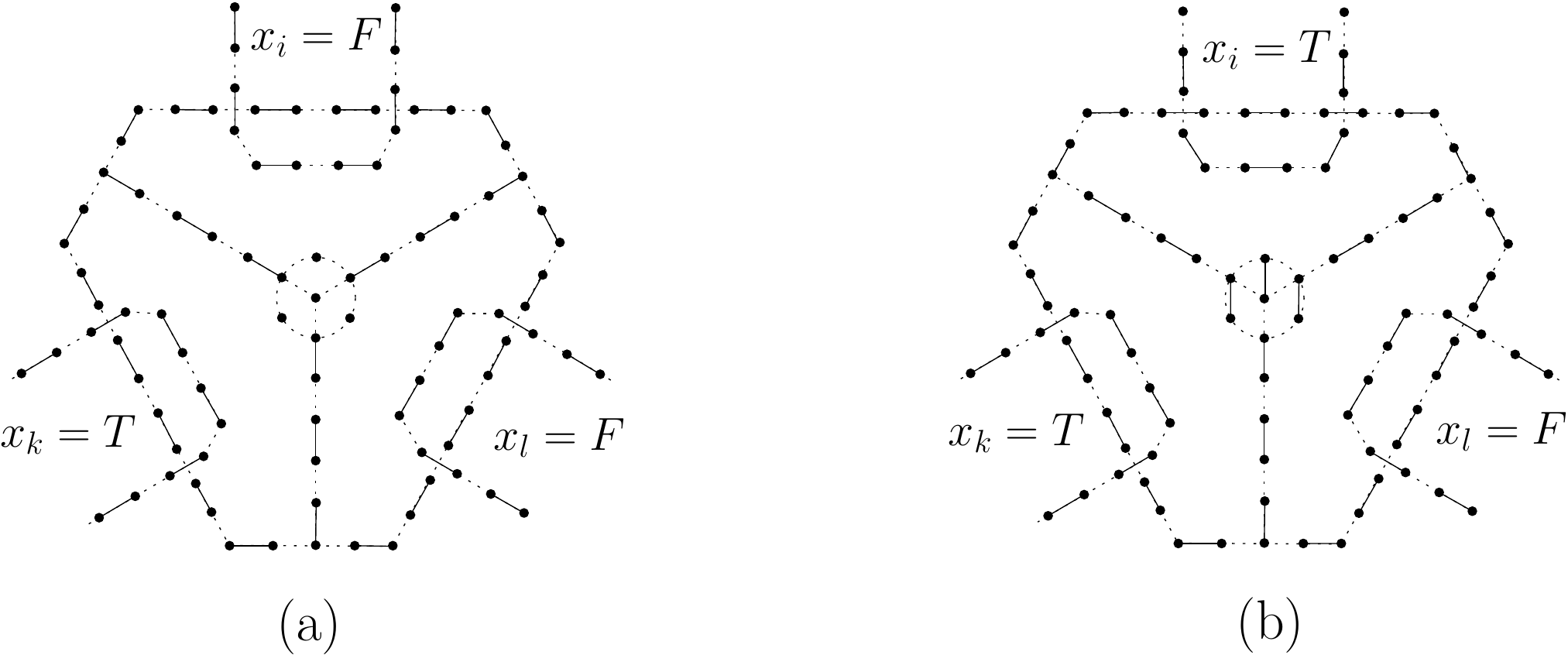}
    \caption{$C_j=(x_i\vee \widetilde{x_k}\vee x_l)$. (a) Assigning values to the variables $x_i,x_k,x_l$ so that the value of each of the corresponding literals is false, ensures that a matching on $h_j$ with bottleneck 1 does not exist. (b) Assigning value to the variables so that the value of at least one of the literals is true, ensures that such a matching does exist.}
    \label{fig:reduction-5}
\end{figure}

To see the correctness of the reduction, consider Figure~\ref{fig:reduction-5}.
First observe that there exists a matching on $h_j$ with bottleneck 1 if, and only if, for at least one of the three paths intersecting $h_j$, its six points lying in $h_j$ are matched to each other (i.e., the value of the corresponding literal is true).
Indeed, in Figure~\ref{fig:reduction-5}(a), the value of each of the three literals of $C_j$ is false, ensuring that a matching on $h_j$ with bottleneck 1 does not exist. And, in Figure~\ref{fig:reduction-5}(b), the value of one of the literals ($x_i$) of $C_j$ is true, ensuring that such a matching does exist. Notice that if the value of $\widetilde{x_k}$ and/or $x_l$ were also true, then one could still use the same matching on $h_j$.

Conversely, assume there exists a matching with bottleneck 1 and consider the truth assignment implied by the matchings on the circuits.
We need to verify that each of the clauses is satisfied by this truth assignment.
Let $h_j$ be the hexagon representing $C_j$. Since there exists a matching on $h_j$ with bottleneck 1, then, for at least one of the paths intersecting $h_j$, its six internal points are necessarily matched to each other. But this implies, as noted above, that the value of the corresponding literal, and therefore of $C_j$, is true.

\end{proof}

Finally, we observe that if one can find a non-crossing matching of bottleneck less than $3\sqrt{2}/4$, then one can solve the planar 3-SAT problem in polynomial time. This bound is obtained from examining the additional points that are added to each hexagon (see squares $A$ and $B$ in Figure~\ref{fig:reduction-4}).

\begin{corollary}
The problem of computing a bottleneck non-crossing matching does not admit a PTAS, unless P=NP.
\end{corollary}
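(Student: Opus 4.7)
The plan is a standard gap-preservation argument building on the reduction in the proof of Theorem~2.1. That reduction already produces a gap: if the input planar 3-CNF formula $F$ is satisfiable, the constructed point set $P$ admits a non-crossing matching $M$ with $bn(M)=1$; so the first task is to show that if $F$ is unsatisfiable, then every non-crossing matching of $P$ satisfies $bn(M)\ge 3\sqrt{2}/4$.

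For the gap analysis, fix an unsatisfiable $F$ and any non-crossing matching of $P$. Some clause $C_j$ must be falsified by the truth assignment induced by the circuits $s_i$. For the hexagon $h_j$, all three incident paths then enter with their ``false'' orientation, so along each such path the two extreme of its six internal points are forced, by the path's own matching together with the $A$ and $B$ pairs of Figure~\ref{fig:reduction-4}, to be matched to the four other interior points of the path. Hence none of the six hexagon vertices of $h_j$ can be matched to a path vertex, and they must be matched among themselves. The six hexagon vertices, however, have no non-crossing perfect matching all of whose edges have length at most $1$ (any such matching would need two opposite sides, which cross through the hole created by the removed literal points). A brief enumeration of the non-crossing matchings of the vertices in the neighbourhood of $h_j$, taking into account the positions of the $A$ and $B$ points, shows that at least one edge must have length at least $3\sqrt{2}/4$, a value attained by the shortest ``skip'' edge compatible with the added pairs.

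With the gap in hand, the PTAS impossibility is immediate. Suppose, for contradiction, that a PTAS existed. Choose $\epsilon>0$ with $1+\epsilon<3\sqrt{2}/4$ (for instance $\epsilon=1/20$). Given a planar 3-CNF formula $F$, construct $P$ in polynomial time as in Theorem~2.1, run the PTAS on $P$ with this $\epsilon$, and output its matching $M$. If $F$ is satisfiable, then $bn(M_{\rm NC})=1$ and therefore $bn(M)\le 1+\epsilon<3\sqrt{2}/4$; if $F$ is unsatisfiable, then $bn(M)\ge bn(M_{\rm NC})\ge 3\sqrt{2}/4$. Comparing $bn(M)$ with the threshold $3\sqrt{2}/4$ thus decides planar 3-SAT in polynomial time, forcing P$=$NP.

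The main obstacle is the geometric gap argument rather than the reduction from PTAS: one must rule out every non-crossing matching of the vicinity of a falsified hexagon whose longest edge is shorter than $3\sqrt{2}/4$. This is a finite case analysis driven entirely by Figure~\ref{fig:reduction-4} and by the forced behaviour of the $A$ and $B$ pairs, but verifying the exact constant $3\sqrt{2}/4$ requires a careful measurement of the shortest diagonal that any alternative matching of the hexagon neighbourhood is obliged to use.
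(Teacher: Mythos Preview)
Your overall strategy is exactly the paper's: the reduction of Theorem~2.1 is a gap reduction (satisfiable $\Rightarrow bn(M_{\rm NC})=1$; unsatisfiable $\Rightarrow bn(M_{\rm NC})\ge 3\sqrt{2}/4$, the constant coming from the $A$/$B$ points in Figure~\ref{fig:reduction-4}), and the PTAS impossibility then follows by choosing $\varepsilon$ with $1+\varepsilon<3\sqrt{2}/4$, just as you wrote.

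There is, however, a real logical gap in your sketch of the unsatisfiable case. You begin with an \emph{arbitrary} non-crossing matching of $P$ and immediately invoke ``the truth assignment induced by the circuits $s_i$''. But an arbitrary non-crossing matching need not restrict to one of the two canonical (true/false) matchings on each circuit or path: nothing yet forbids a circuit point from being matched to a non-adjacent point, to a path point, or across gadgets. Consequently the sentence ``some clause $C_j$ must be falsified'' is not justified, and the subsequent analysis localised at $h_j$ does not cover all matchings. The correct order is the contrapositive you actually need: assume $bn(M)<3\sqrt{2}/4$ and argue that this threshold forces the canonical structure \emph{globally}---on every circuit, every path, and through every $A$/$B$ junction---so that a well-defined truth assignment emerges and satisfies all clauses, contradicting unsatisfiability. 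The paper's (admittedly terse) justification is precisely this: $3\sqrt{2}/4$ is the shortest ``illegal'' distance anywhere in the construction, realised at the $A$/$B$ pairs, so below that threshold no edge can deviate from the intended unit-length pattern. Your final paragraph rightly flags the geometric verification as the crux, but the argument before it should be reorganised in this direction rather than presupposing an induced assignment.
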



\section{Approximation Algorithm}

Let $P$ be a set of $2n$ points in general position in the plane.
The {\em bottleneck} of a perfect matching $M$ of $P$, denoted $bn(M)$, is the length of a longest edge of $M$.
Let $M_\times$ be a perfect matching of $P$. In this section we show how to convert $M_\times$ into
a non-crossing perfect matching $M_=$ of $P$, such that $bn(M_=) \le 2\sqrt{10} \cdot bn(M_\times)$.

Set $\delta = bn(M_\times)$.
We begin by laying a grid of edge length $2\sqrt{2}\delta$. W.l.o.g. assume that each of the points in $P$ lies in the interior of some grid cell.
Consider an edge $e$ of $M_\times$. Since $e$ is of length at most $\delta$, it is either contained in a single grid cell, or
its endpoints lie in two adjacent cells (i.e., in two cells sharing a side or only a corner). In the former case, we say that $e$ is {\em internal}, and in the latter case, we say that $e$ is {\em external}. We distinguish between two types of external edges: a {\em straight} external edge (or s-edge for short) connects between a pair of points in two cells that share a side, while a {\em diagonal} external edge (or d-edge for short) connects between a pair of points in two cells that share only a corner.
Finally, the {\em degree} of a grid cell $C$, denoted $deg(C)$, is the number of external edges with an endpoint in $C$.

Our algorithm consists of two stages. In the first stage we convert $M_\times$ into another perfect matching $M'_\times$, such that (i) each edge of $M'_\times$ is either contained in a single cell, or connects between a pair of points in two adjacent cells, (ii) for each grid cell $C$, $deg(C) \le 4$ and these $deg(C)$ edges connect $C$ to $deg(C)$ of its adjacent cells, and (iii) some additional properties hold (see below).
In the second stage, we construct the matching $M_=$ according to $M'_\times$, such that
there is a one-to-one correspondence between the external edges of $M'_\times$ and the external edges of $M_=$. That is, there exists an edge in $M'_\times$ connecting between two adjacent cells $C_1$ and $C_2$ if and only if there exists such an edge in $M_=$. However, the endpoints of an external edge of $M_=$ might be different than those of the corresponding edge of $M'_\times$.

The second stage itself consists of two parts. In the first part, we consider each non-empty grid cell separately. When considering such a cell $C$, we first determine the at most four points that will serve as endpoints of the external edges of $C$ (as dictated by $M'_\times$).
Next, we construct a non-crossing matching for the remaining points in $C$. In the second part of this stage, we add the external edges between the points that were chosen as endpoints for these edges in the first part.

We now describe each of the stages in detail.
\subsection{Stage~1}

In this stage we convert $M_\times$ into $M'_\times$.
We do it by applying a sequence of reduction rules to the current matching, starting with $M_\times$. Each of the rules is applied multiple times, as long as there is an instance in the current matching to which it can be applied. When there are no more such instances, we move to the next rule in the sequence.

\begin{figure}[htb]
    \centering
        \includegraphics[width=0.55\textwidth]{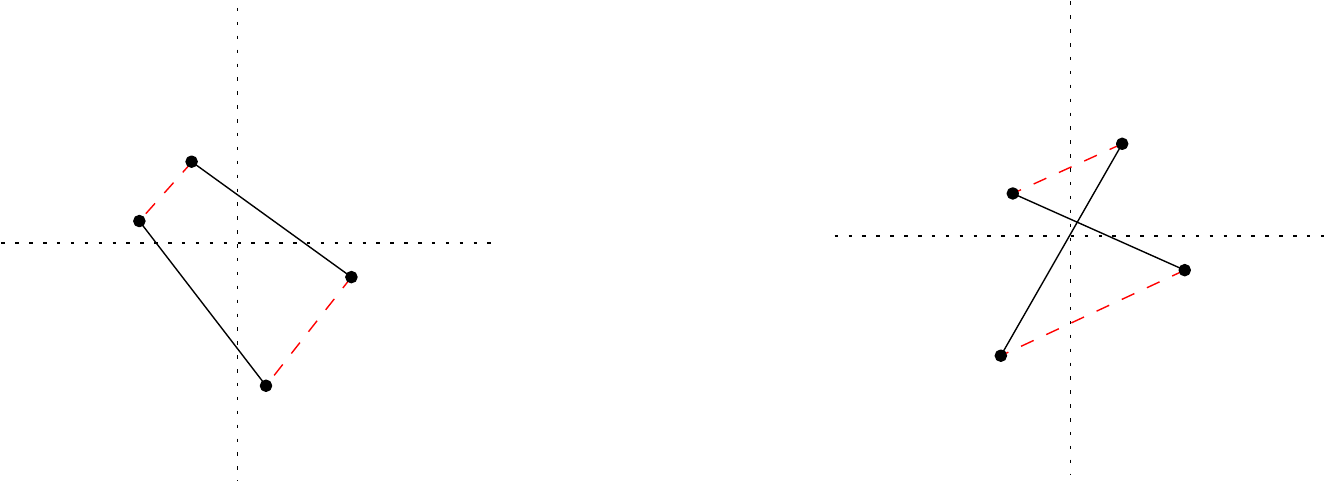}
        \caption{Rule I (left) and Rule II (right). The solid edges are replaced by the dashed edges.}
    \label{fig:rules_I_and_II}
\end{figure}

We associate a d-edge connecting between two cells with the corner shared by these cells.\\
{\bf Rule~I} is applied to a pair of d-edges associated with the same corner and connecting between the same pair of cells, see Figure~\ref{fig:rules_I_and_II}. The d-edges are replaced by a pair of internal edges.\\
{\bf Rule~II} is applied to a pair of d-edges associated with the same corner and connecting between different pairs of cells, see Figure~\ref{fig:rules_I_and_II}. The d-edges are replaced by a pair of s-edges.\\
Notice that when we are done with Rule~I, each corner has at most two d-edges associated with it, and if it has two, then they connect between different pairs of cells. Moreover, when we are done with Rule~II, each corner has at most one d-edge associated with it.
Finally, since the length of a d-edge is at most $\delta$, any edge created by Rule~I or Rule~II is contained in the disk $D_\delta(a)$ of radius $\delta$ centered at the appropriate corner $a$.


\begin{figure}[htp]
    \centering
        \includegraphics[width=0.55\textwidth]{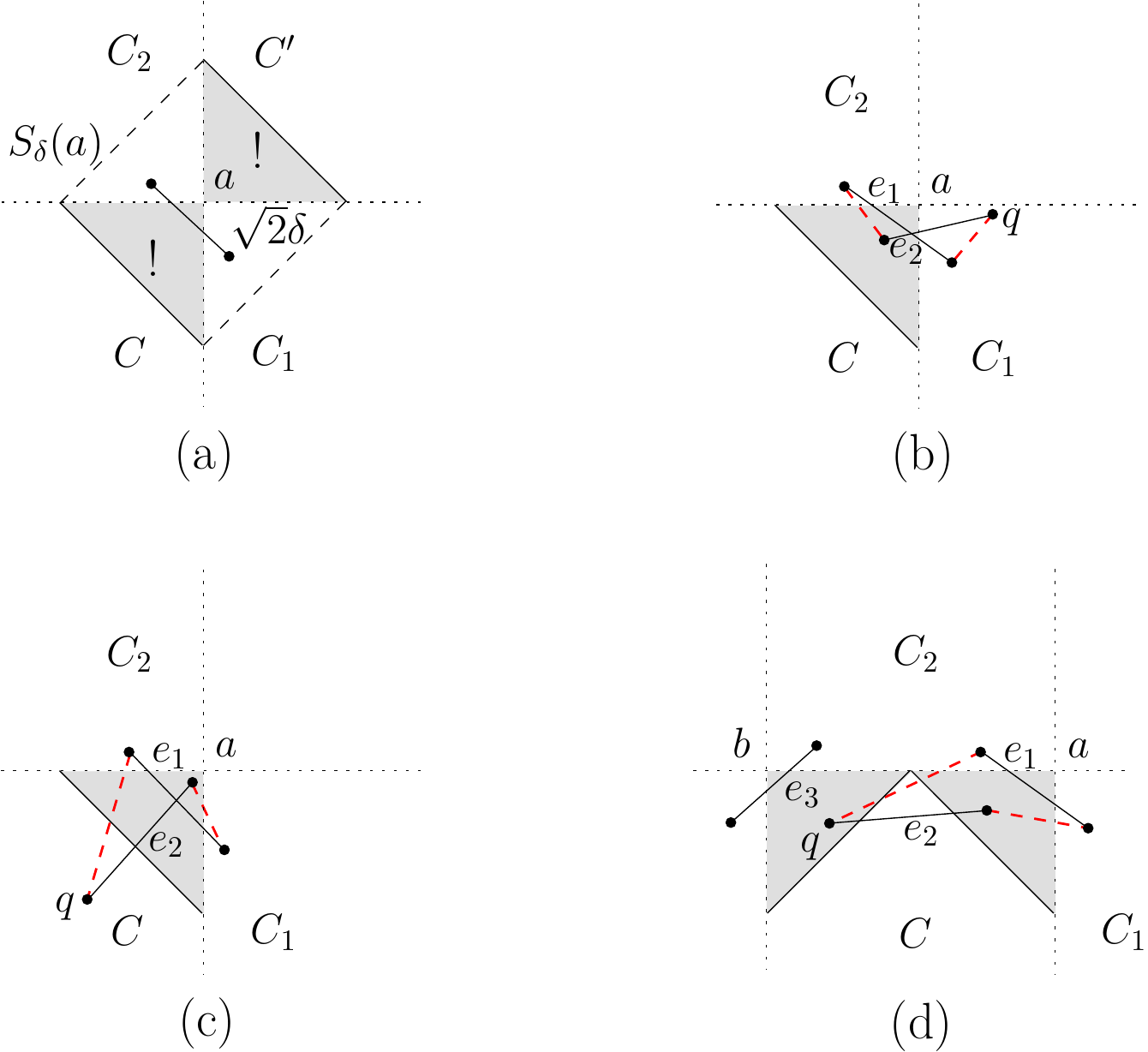}
        \caption{(a) The danger zones defined by a d-edge. Rule III: (b) $e_2$ is a s-edge. (c) $e_2$ is an internal edge and $q$ is not in another danger zone. (d) $e_2$ is an internal edge and $q$ is in another danger zone.}
    \label{fig:ruleIII}
\end{figure}

A d-edge associated with a corner $a$ defines a triangular {\em danger zone} in each of the two other cells sharing $a$, see Figure~\ref{fig:ruleIII}(a). A danger zone is semi-open; it does not include the hypotenuse. Let $S_\delta(a)$ denote the square that two of its sides are the hypotenuses of the danger zones defined by the d-edge associated with $a$. Notice that if $p$ is a point in cell $C$ that is not in the danger zone in $C$, then one cannot draw a d-edge between $C_1$ and $C_2$, with endpoints in the interior of $S_\delta(a)$, that passes through $p$.\\
{\bf Rule~III} is applied to a d-edge $e_1$ and an edge $e_2$ with an endpoint in a danger zone defined by $e_1$, see Figure~\ref{fig:ruleIII}(b--d).
We distinguish between two cases. If $e_2$ is a s-edge, then, by the claim below, its other endpoint is in one of the cells $C_1$ or $C_2$. In this case, we replace $e_1$ and $e_2$ with an internal edge and a s-edge, see Figure~\ref{fig:ruleIII}(b).
If $e_2$ is an internal edge, then consider the other endpoint $q$ of $e_2$. If $q$ is not in a danger zone in $C$ defined by another d-edge, then replace $e_1$ and $e_2$ with two s-edges, see Figure~\ref{fig:ruleIII}(c). If, however, $q$ is in a danger zone in $C$ defined by another d-edge $e_3$, then, by the claim below, $e_3$ is associated with one of the two corners of $C$ adjacent to the corner $a$ (to which $e_1$ is associated), and therefore, either $C_1$ or $C_2$ contains an endpoint of both $e_1$ and $e_3$.
Replace $e_1$ and $e_2$ with two s-edges, such that one of them connects $q$ to the endpoint of $e_1$ that is in the cell that also contains an endpoint of $e_3$, see Figure~\ref{fig:ruleIII}(d).

\begin{claim}
Consider Figure~\ref{fig:ruleIII}.
In an application of Rule~III, (i) if $e_2$ is a s-edge, then its other endpoint $q$ is in one of the cells $C_1$ or $C_2$, and (ii) if $q$ is in a danger zone defined by another d-edge $e_3$ associated with corner $b$, then $\overline{ab}$ is a side of $C$ and the danger zone containing $q$ is on the side of $b$ containing $a$.
\end{claim}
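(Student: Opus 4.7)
The plan is to prove both parts of the claim through elementary distance arguments, exploiting the fact that every edge of $M_\times$ has length at most $\delta$ while each grid cell has side length $s=2\sqrt{2}\delta$. The key geometric observation I would record at the outset is that the triangular danger zone defined by a d-edge at a corner $a$ inside a cell $C$ is contained in the disk $D_\delta(a)$: its two legs of length $\delta$ lie along the sides of $C$ meeting at $a$, so every point of the danger zone is within distance $\delta$ of $a$.

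For part~(i), let $p$ denote the endpoint of $e_2$ in the danger zone at $a$ and let $r$ be its other endpoint. Since $e_2$ is an s-edge, $r$ lies in one of the four cells sharing a side with $C$. Two of those sides contain $a$, and the cells across them are exactly $C_1$ and $C_2$ (these are the neighbors of $C$ that share the corner $a$). For the other two cells, across the sides of $C$ that do not contain $a$, every point sits at distance at least $s=2\sqrt{2}\delta$ from $a$. Since $|ar|\le|ap|+|pr|\le 2\delta<2\sqrt{2}\delta$, $r$ cannot lie in either of those two cells, so it must be in $C_1$ or $C_2$.

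For part~(ii), let $q$ be the other endpoint of the internal edge $e_2$, lying in a danger zone at some corner $b$ of $C$ distinct from $a$. The triangle inequality gives $|ab|\le|ap|+|pq|+|qb|\le 3\delta$. The three corners of $C$ other than $a$ sit at distance $s=2\sqrt{2}\delta$ (the two corners adjacent to $a$ along a side of $C$) or $s\sqrt{2}=4\delta$ (the corner diagonally opposite $a$ in $C$). Since $3\delta<4\delta$, the diagonally opposite corner is excluded, so $\overline{ab}$ must be a side of $C$. The second assertion then falls out of the shape of the danger zone at $b$: it has its right angle at $b$, with legs of length $\delta$ running along the two sides of $C$ meeting $b$. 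One of those two sides is $\overline{ab}$, and the only direction along $\overline{ab}$ that enters $C$ from $b$ is toward $a$; this is what is meant by ``on the side of $b$ containing $a$.'' This positioning then forces $e_3$ to connect the two cells at $b$ adjacent to $C$ (one across $\overline{ab}$, the other across the perpendicular side of $C$ at $b$), and the cell across $\overline{ab}$ is exactly $C_1$ or $C_2$, giving the downstream combinatorial consequence used in Rule~III. I expect the main obstacle to be bookkeeping --- carefully matching the labels of the cells around $a$ and $b$ --- rather than any deep geometric difficulty.
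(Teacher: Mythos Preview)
Your distance argument is clean, but it rests on an assumption that does not hold: you use $|e_2|\le\delta$ in both parts (to get $|ar|\le 2\delta$ in (i) and $|ab|\le 3\delta$ in (ii)). That bound is only valid for edges of the original matching $M_\times$. Rule~III, however, is applied repeatedly to the \emph{current} matching, so $e_2$ may itself have been produced by an earlier application of Rule~III. In particular, the s-edge created in case~(d) connects a point $q$ in a danger zone at a corner $b$ to an endpoint of the removed d-edge $e_1$ near the adjacent corner $a$; since $|ab|=2\sqrt{2}\,\delta$, this s-edge can have length close to $(2\sqrt{2}+2)\delta$, far exceeding the slack your triangle-inequality bounds allow. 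Similarly, internal edges created by Rule~I have length up to $2\delta$, which already makes the bound $|ab|\le 3\delta$ in (ii) too weak to exclude the diagonally opposite corner of $C$ (at distance exactly $4\delta$).

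The paper's proof avoids this by arguing inductively over the sequence of rule applications: it checks that (i) and (ii) hold for original edges by the length bound you use, and then verifies separately that each edge manufactured by Rules~I, II, or an earlier instance of Rule~III either cannot have an endpoint in a danger zone at all, or, in the one case where it can (the s-edge from case~(d)), the replacement was deliberately chosen so that its other endpoint lands in the correct cell. In short, the claim is not a purely metric fact about danger zones; it depends on the specific way the replacement edges in Rule~III(d) are routed, and your argument does not invoke that.
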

\begin{proof}
Statement~(i) is surely true just before the first application of Rule~III, since the length of $e_2$ then is at most $\delta$. (Notice that a s-edge created by Rule~II cannot have an endpoint in a danger zone.) For the same reason, Statement~(ii) is surely true just before the first application of Rule~III. (Notice that if $e_2$ is an internal edge created by Rule~I, then it is contained in one of the two danger zones defined by $e_1$.) It remains to verify that if $e_2$ was created by a previous application of Rule~III, then both statements are still true. Indeed, if $e_2$ is a s-edge created by a previous application of Rule~III, then it had to be an application of the type depicted in Figure~\ref{fig:ruleIII}(d), and the replacement instructions for this type ensure that Statement~(i) is true. As for Statement~(ii), if $e_2$ was created by an application of Rule~III, then, since $q$ was an endpoint of a d-edge that was removed by the application of Rule~III, it is not in a danger zone.
\end{proof}

\begin{figure}[htp]
    \centering
        \includegraphics[width=0.55\textwidth]{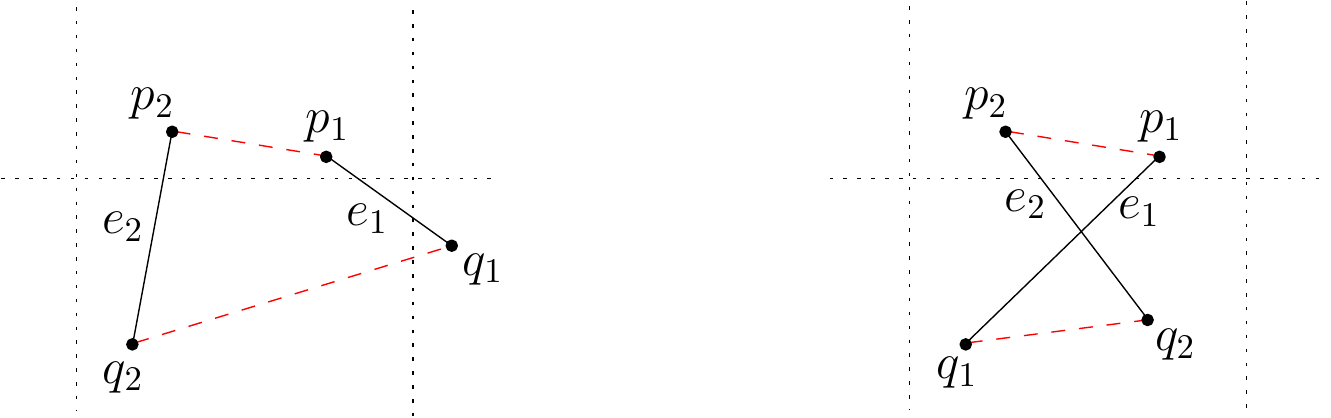}
        \caption{Rule~IV (left) and Rule~V (right).}
    \label{fig:ruleIV}
\end{figure}

Let $C(p)$ denote the cell containing point $p$.\\
{\bf Rule~IV} is applied to a d-edge $e_1=(p_1,q_1)$ and to a s-edge $e_2=(p_2,q_2)$, such that $C(p_1)=C(p_2)$ and $C(q_1)$ and $C(q_2)$ share a side, see Figure~\ref{fig:ruleIV}. $e_1$ and $e_2$ are replaced by an internal edge and a s-edge.\\  
{\bf Rule~V} is applied to a pair of s-edges $e_1=(p_1,q_1)$ and $e_2=(p_2,q_2)$, such that $C(p_1)=C(p_2)$ and $C(q_1)=C(q_2)$, see Figure~\ref{fig:ruleIV}. $e_1$ and $e_2$ are replaced by a pair of internal edges.\\

Let $M'_\times$ be the matching that is obtained after applying Rules~I-V. The following lemma summarizes some of the properties of $M'_\times$; its proof follows immediately from the discussion above.
\begin{lemma}
$M'_\times$ has the following properties:
\label{lem:properties}
\begin{enumerate}
\item
Each edge is either contained in a single cell, or connects between a pair of points in two adjacent cells.
\item
A corner has at most one d-edge associated with it.
\item
A d-edge is of length at most $\delta$.
\item
The two danger zones defined by a d-edge $e$ are empty of points of $P$.
\item
For each grid cell $C$, $deg(C) \le 4$ and these $deg(C)$ edges connect $C$ to $deg(C)$ of its adjacent cells.
\item
If $e$ is a d-edge in $M'_\times$ connecting between cells $C_1$ and $C_2$, and $C$ is a cell sharing a side with both $C_1$ and $C_2$, then there is no s-edge in $M'_\times$ connecting between $C$ and either $C_1$ or $C_2$.
\end{enumerate}
\end{lemma}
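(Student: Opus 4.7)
The plan is to verify the six properties in turn, tracing each back to the specific rule or rules that enforce it, and then checking that any subsequent rules in the sequence preserve it.

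Properties~1--3 should be essentially bookkeeping. Property~1 holds initially because the grid has side length $2\sqrt{2}\delta$ and every edge of $M_\times$ has length at most $\delta$, so every edge starts out as either internal, an s-edge, or a d-edge. I would check rule by rule that every replacement edge is again of one of these three types: in Rules~I--III each replacement edge lies inside a single cell or inside the disk $D_\delta(a)$ around the relevant corner $a$, while Rules~IV and~V simply repartition four endpoints that already lie in at most two cells sharing a side. Property~2 holds as soon as Rule~I (which eliminates two d-edges at one corner connecting the same pair of cells) and Rule~II (which eliminates the remaining pair of d-edges at a corner) have been exhaustively applied, and none of the later rules ever creates a new d-edge. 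Property~3 then follows for the same reason: every d-edge present in $M'_\times$ is inherited from $M_\times$ and therefore has length at most $\delta$.

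For Property~4, I would observe that after Rule~III no edge has an endpoint in any danger zone, and since every point of $P$ is matched, no point of $P$ lies in a danger zone. I then need to check that Rules~IV and~V preserve this: neither rule relocates points, neither creates new d-edges, and Rule~IV only removes d-edges, so some danger zones may vanish but no new ones appear and no point becomes exposed. Property~6 I would obtain directly from Rule~IV: if $e$ is a d-edge connecting $C_1$ and $C_2$ and an s-edge runs from $C_1$ (respectively, $C_2$) to a cell $C$ that shares a side with $C_2$ (respectively, $C_1$), these two edges satisfy the trigger pattern of Rule~IV, so no such configuration can survive. Rule~V, which only removes s-edges, cannot create a violation afterwards.

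The step that I expect to require the most care is Property~5, since the rules do not directly bound $deg(C)$; a naive argument only gives $deg(C)\le 8$. I would derive $deg(C)\le 4$ by a case analysis at the four corners of $C$. Rules~I and~V guarantee that between any two adjacent cells there is at most one external edge, so the $deg(C)$ external edges automatically reach $deg(C)$ distinct neighbors. At each corner $a$ of $C$, Property~2 allows at most one d-edge associated with $a$; by Property~6, any such d-edge---whether it is the d-edge from $C$ to its diagonal neighbor at $a$, or a d-edge between the two side-neighbors of $C$ at $a$---forbids both of the potential s-edges from $C$ across the two sides of $C$ incident to $a$. A short case analysis on how many of the four corners of $C$ carry a d-edge then shows that the number of d-edges incident to $C$ plus the number of surviving s-edges of $C$ never exceeds four, which gives the bound.
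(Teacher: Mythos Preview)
Your proposal is correct and aligned with the paper, which simply asserts that the lemma ``follows immediately from the discussion above''; your rule-by-rule verification, including the corner-based case analysis for Property~5 via Property~6, is exactly the intended argument made explicit. One minor imprecision: for Rule~III the replacement edges are not all contained in $D_\delta(a)$ (the s-edge in case~(b) may reach a point $q$ far from $a$), but they are nonetheless internal edges or s-edges between side-adjacent cells, so Property~1 is unaffected.
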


\subsection{Stage 2}

In this stage we construct $M_=$ according to $M'_\times$.
This stage consists of two parts.

\subsubsection{Part 1: Considering each cell separately}
In this part we consider each non-empty grid cell separately.
Let $C$ be a non-empty grid cell and set $P_C = P \cap C$.
We have to determine which of the points in $C$ will serve as endpoints of external edges.
The rest of the points will serve as endpoints of internal edges.
We have to consider both types of external edges, d-edges and s-edges.
We first consider the d-edges, then the s-edges, and, finally, after fixing the endpoints of the external edges, we form the internal edges.


For each corner $a$ of $C$ that has a d-edge (with endpoint in $C$) associated with it in $M'_\times$, consider the line through $a$ supporting $C$ and parallel to its appropriate diagonal, and pick the point $p_a$ in $C$ that is closest to this line as the endpoint (in $C$) of the corresponding d-edge in $M_=$.
By Lemma~\ref{lem:properties} (Property~3), $P_C \cap D_\delta(a) \ne \emptyset$, and therefore the distance between $p_a$ and $a$ is less than $\sqrt{2}\delta$. Moreover, since the length of a side of $C$ is $2\sqrt{2}\delta$, each of the relevant corners is assigned a point of its own.
Observe also that a d-edge in $M_=$ with endpoint in $C$ will not cross any of the s-edges in $M_=$ with endpoint in $C$. This follows from Lemma~\ref{lem:properties} (Property~6).
We thus may ignore the points in $C$ that were chosen as endpoints of d-edges, and proceed to choose the endpoints of the s-edges.

\begin{figure}[htp]
    \centering
       \includegraphics[width=.80\textwidth]{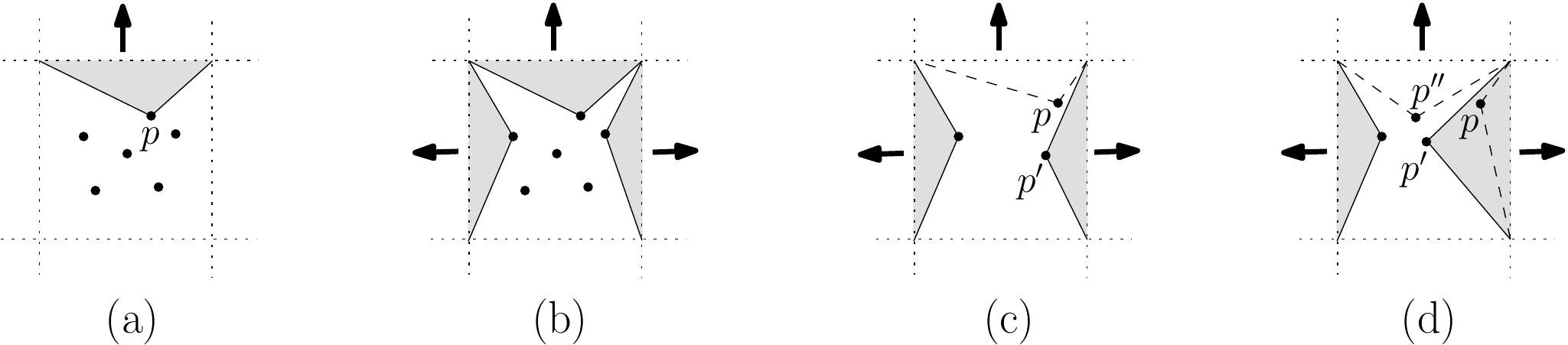}
   	\caption{(a) $\Delta(p,\mbox{``up''})$. Procedure~1: (b) $D=\{\mbox{``left''},\mbox{``up''},\mbox{``right''}\}$ and $|Q|=3$. (c) $p \not \in \Delta(p',\mbox{``right''})$ and $p$ is assigned to ``up''. (d) $p \in \Delta(p',\mbox{``right''})$ and $p$ is assigned to ``right''; $p''$ will be assigned to ``up''.}
   	\label{fig:proc1}
\end{figure}

Let $\Delta(p,dir)$ denote the triangle whose corners are $p$ and the two corners of $C$ in direction $dir$, where $dir \in \{\mbox{``up''},\mbox{``down''},\mbox{``left''},\mbox{``right''}\}$; see Figure~\ref{fig:proc1}(a).
If there is only one s-edge in $M'_\times$ with endpoint in $C$, and its direction is $dir$, then we pick the extreme point in direction $dir$ as the endpoint (in $C$) of the corresponding s-edge in $M_=$.
Assume now that there are $k$, $2 \le k \le 4$, s-edges in $M'_\times$ with endpoints in $C$.
We pick the $k$ endpoints (in $C$) of the corresponding s-edges in $M_=$ according to the recursive procedure below.

\floatname{algorithm}{Procedure}
				\begin{algorithm}
				\caption{PickEndpoints (a cell $C$, a set of directions $D=\{dir_1,\ldots,dir_k\}$)}
				\label{alg:Proc1}
				\begin{algorithmic}[1]
				
					\STATE $Q \leftarrow \emptyset$
					\FOR {each direction $dir_i \in D$}
					        \STATE let $p_i$ be the extreme point in direction $dir_i$
					 		\STATE $Q \leftarrow Q \cup \{p_i\}$
					\ENDFOR
					\IF {$|Q|=k$}
                            \STATE assign $p_i$ to direction $dir_i$, $i = 1,\ldots,k$
							\STATE Return\/$(Q)$
					
          \ENDIF
          \STATE let $p \in Q$ be a point that is the extreme point in two directions $dir_i, dir_j \in D$
					\STATE $Q' \leftarrow$ PickEndpoints\/$(C \setminus \{p\}, D \setminus \{dir_i\})$
					\STATE let $p' \in Q'$ be the point assigned to direction $dir_j$
					\IF {$p \not \in \Delta(p',dir_j)$}
						  	\STATE assign $p$ to direction $dir_i$	
					\ELSE
						 	\STATE $Q' \leftarrow$ PickEndpoints\/$(C \setminus \{p\}, D \setminus \{dir_j\})$
						 	\STATE assign $p$ to direction $dir_j$
					\ENDIF
          \STATE Return\/$(Q' \cup \{p\})$
\end{algorithmic}
\end{algorithm}

\begin{lemma}
The s-edges in $M_=$ with endpoints in $C$ do not cross each other.
\end{lemma}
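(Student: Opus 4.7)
The plan is to prove, by induction on $|D|=k$, that the assignment produced by \texttt{PickEndpoints}$(C,D)$ satisfies the following \emph{triangle condition}: for every pair of assigned points $p_a\to dir_a$ and $p_b\to dir_b$, we have both $p_a\notin\Delta(p_b,dir_b)$ and $p_b\notin\Delta(p_a,dir_a)$. The lemma will then follow from a geometric translation of this condition.

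For the triangle condition, the base case $|Q|=k$ is immediate: each direction is assigned its unique extreme in $P_C$, and general position rules out any other assigned point being inside the $\Delta$ of another. In the recursive step, the freshly placed point $p$ is extreme in its assigned direction over $P_C$, so the half of the condition with $p$ as the triangle's apex is automatic for every pair involving $p$. The remaining half splits into two subcases. The pair $(p,p')$ is handled either by the procedure's explicit check $p\notin\Delta(p',dir_j)$ or, failing that, by the reroute in the else-branch. For every other pair $(p,p'')$ with $p''\to dir''$ coming from the recursive call and $dir''\notin\{dir_i,dir_j\}$, I would use the fact that a single point can be extreme in at most two of the four cardinal directions; since $p$ is already extreme in $dir_i$ and $dir_j$, $p''$ is in fact extreme in $dir''$ over all of $P_C$, which forces $p\notin\Delta(p'',dir'')$. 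The subtlest case is the else-branch: after rerouting $p$ to $dir_j$, I would verify that the point $p'''$ picked in the second recursive call for $dir_i$ still satisfies $p\notin\Delta(p''',dir_i)$. This reduces to comparing the slopes of the segments from $p$, $p'$, and $p'''$ to the corner $a$ of $C$ shared by the $dir_i$- and $dir_j$-sides: the else-branch condition $p\in\Delta(p',dir_j)$ implies that the slope of $\overline{pa}$ dominates that of $\overline{p'a}$, while the extremality of $p'$ in $dir_j$ and of $p'''$ in $dir_i$ forces the slope of $\overline{p'a}$ to dominate that of $\overline{p'''a}$; chaining these inequalities yields $p\notin\Delta(p''',dir_i)$.

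For the geometric step, I would prove that whenever the triangle condition holds for $p_a\to dir_a$ and $p_b\to dir_b$, the two s-edges cannot cross inside $C$, regardless of the choice of their other endpoints in the adjacent cells. When $dir_a$ and $dir_b$ are opposite, each s-edge is confined to a half-strip of $C$ determined by its endpoint's coordinate, and the triangle condition forces these half-strips to be disjoint. When $dir_a$ and $dir_b$ are adjacent, I would form the four-vertex polygon $p_a, e_a, e_b, p_b$, where $e_a$ and $e_b$ are the exit points of the two s-edges on the boundary of $C$, and argue that the triangle condition at $p_a$ and $p_b$ forces this polygon to be simple with a consistent orientation at each of its four vertices. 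The two s-edges then appear as non-adjacent sides of a simple quadrilateral, which cannot intersect.

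The delicate point is the slope-comparison in the else-branch and the orientation check at the quadrilateral's vertices, which are really two facets of the same analysis around the ``pivot'' corner of $C$ shared by the two active sides. The triangle condition is exactly the combinatorial statement that encodes the required ordering there, and conversely any violation of it would correspond to an interleaving configuration that does produce a crossing; everything else reduces to routine consequences of extremality and general position.
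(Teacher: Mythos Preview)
Your overall architecture matches the paper's: induction on $k$, an invariant phrased in terms of the triangles $\Delta(\cdot,dir)$, and then a geometric translation into non-crossing. The geometric translation you sketch is in fact more detailed than what the paper writes (the paper simply asserts that empty triangles imply non-crossing). But your inductive invariant is too weak, and this shows up as a genuine gap in the else-branch.

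You carry only the \emph{pairwise} triangle condition: for assigned pairs $p_a\to dir_a$, $p_b\to dir_b$, neither point lies in the other's triangle. The paper instead proves the stronger statement that each assigned triangle $\Delta(p_a,dir_a)$ is empty of \emph{all} points of the current set, not just the assigned ones. This difference matters precisely in the else-branch, where you must show $p\notin\Delta(p''',dir_i)$ with $p'''$ coming from the second recursive call. Your slope argument appeals to ``the extremality of $p'$ in $dir_j$ and of $p'''$ in $dir_i$'', but neither is justified: $p'$ is the point \emph{assigned} to $dir_j$ by the first recursive call, and if that call itself recurses (which it can when $k=4$), $p'$ need not be the $dir_j$-extreme of $P_C\setminus\{p\}$; the same holds for $p'''$. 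So the slope chain breaks. Moreover, since $p'$ and $p'''$ come from \emph{different} recursive calls, your pairwise invariant gives you no direct relation between them.

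The paper's fix is exactly to strengthen the hypothesis. With the ``empty of all points'' invariant, the first call gives $p''\notin\Delta(p',dir_j)$ (because $p''\in P_C\setminus\{p\}$), and the second call gives $p'\notin\Delta(p'',dir_i)$. The paper then argues the purely geometric fact that $p\in\Delta(p',dir_j)$ together with $p\in\Delta(p'',dir_i)$ would force one of $p''\in\Delta(p',dir_j)$ or $p'\in\Delta(p'',dir_i)$, a contradiction. (A smaller issue: your claim that ``$p''$ is in fact extreme in $dir''$ over all of $P_C$'' for $dir''\notin\{dir_i,dir_j\}$ is also unjustified, though your conclusion $p\notin\Delta(p'',dir'')$ is correct for a different reason: $dir''$ is opposite to one of $dir_i,dir_j$, so $p$ is anti-extreme in $dir''$ and cannot lie in any $\Delta(\cdot,dir'')$.)
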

\begin{proof}

By induction on $k$. If $k=1$, then there is nothing to prove. Assume $k \ge 2$ and
consider Procedure~1. If $|Q|=k$, then, for each $p_i \in Q$, the triangle $\Delta(p_i,dir_i)$ is empty, see Figure~\ref{fig:proc1}(b). Therefore, any two s-edges, one with endpoint $p_i$ and direction $dir_i$ and another with endpoint $p_j$ and direction $dir_j$, $i \ne j$, do not cross each other.
If $|Q| < k$, then, since $k \ge 2$, the directions $dir_i$ and $dir_j$ are not a pair of opposite directions.
Now, if $p$, the extreme point in directions $dir_i$ and $dir_j$, is not in $\Delta(p',dir_j)$ (Figure~\ref{fig:proc1}(c)), then $|Q' \cup \{p\}|=k$ and
each of the corresponding $k$ triangles is empty. If, however, $p \in \Delta(p',dir_j)$ (Figure~\ref{fig:proc1}(d)), then let $p''$ be the point assigned to direction $dir_i$ by the call to PickEndpoints in Line~14. We claim that $p \not \in \Delta(p'',dir_i)$. Indeed, if $p$ were in $\Delta(p'',dir_i)$, then either $p'' \in \Delta(p',dir_j)$ or $p' \in \Delta(p'',dir_i)$, contradicting in both cases the induction hypothesis for $k-1$.
\end{proof}

We are now ready to form the internal edges. Let $P_C^E \subseteq P_C$ be the set of points that were chosen as endpoints of external edges, and set $P_C^I = P_C \setminus P_C^E$. We show below that $P_C^I$ is contained in the interior of a convex region $R \subseteq C$, such that any external edge with endpoint in $C$ does not intersect the interior of $R$. Hence, if we form the internal edges by visiting the points in $P_C^I$ from left to right and matching each odd point with the next point in the sequence, then the resulting edges do not cross each other and do not cross any of the external edges.

It remains to define the convex region $R$. For each endpoint $p_i$ of a s-edge, draw a line $l_i$ through $p_i$ that is parallel to the side of $C$ crossed by the s-edge. Let $h_i$ be the half-plane defined by $l_i$ and not containing the s-edge, and set $R_i = h_i \cap C$. Similarly, for each endpoint $p_a$ of a d-edge, draw a line $l_a$ through $p_a$ that is parallel to the appropriate diagonal of $C$. Let $h_a$ be the half-plane defined by $l_a$ and not containing the d-edge, and set $R_a = h_a \cap C$. Finally, set $R = (\cap \{R_a\}) \cap (\cap \{h_i\})$. It is clear that $R$ is convex and that any external edge with endpoint in $C$ does not intersect the interior of $R$.
Moreover, by the way we chose the endpoints of the d-edges and s-edges, it is clear that $P_C^I$ is contained in the interior of $R$.

\subsubsection{Part 2: Putting everything together}

In this part we form the external edges of $M_=$.
For each external edge of $M'_\times$ connecting between cells $C_1$ and $C_2$, let $p$ (resp., $q$) be the point that was chosen as the endpoint in $C_1$ (resp., in $C_2$) of the corresponding edge of $M_=$, and match $p$ with $q$.

We have already shown that if $e_1$ and $e_2$ are two edges of $M_=$, for which there exists a grid cell containing an endpoint of both $e_1$ and $e_2$, then $e_1$ and $e_2$ do not cross each other. It remains to verify that a d-edge $e$ of $M_=$ connecting between $C_1$ and $C_2$ cannot cause any trouble in the cell $C$ through which it passes. Notice that $e \cap C$ is contained in the danger zone in $C$ defined by $e$ (or, more precisely, by the d-edge of $M'_\times$ corresponding to $e$). But, by Lemma~\ref{lem:properties} (Property~4), this danger zone is empty of points of $P$, thus $e$ cannot cross an internal edge contained in $C$. Moreover, Lemma~\ref{lem:properties} (Property~2) guarantees that there is no d-edge in $M_=$ crossing $e$, and Lemma~\ref{lem:properties} (Property~6) guarantees that there is no s-edge crossing $e$. We conclude that $e$ does not cause any trouble in $C$.

Finally, observe that $bn(M_=) \le 2\sqrt{10}\delta = 2\sqrt{10} \cdot bn(M_\times)$. This is true since the length of a d-edge in $M_=$ is at most $2\sqrt{2}\delta$ (i.e., the diagonal of $S_\delta(a)$), the length of an internal edge in $M_=$ is at most $4\delta$ (i.e., the diagonal of a single cell), and the length of a s-edge in $M_=$ is at most $2\sqrt{10}\delta$ (i.e., the diagonal of a pair of cells sharing a side).

The following theorem summarizes the main results of this section.
\begin{theorem}
Let $P$ be a set of $2n$ points in the plane.
Let $M_{\rm C}$ (resp., $M_{\rm NC}$) be a bottleneck matching (resp., a bottleneck non-crossing matching) of $P$. Then,
\begin{enumerate}
\item
$\frac{bn(M_{\rm NC})}{bn(M_{\rm C})} \le 2\sqrt{10}$.
\item
One can compute in $O(n^{1.5}\log^{0.5} n)$ time a non-crossing matching $M$ of $P$, such that $bn(M) \le 2\sqrt{10} \cdot bn(M_{\rm NC})$.
\end{enumerate}
\end{theorem}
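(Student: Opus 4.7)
The plan is to apply the conversion of Section~3 with $M_\times := M_{\rm C}$. Setting $\delta := bn(M_{\rm C})$, Stages~1 and~2 yield a non-crossing matching $M_=$ with $bn(M_=) \le 2\sqrt{10}\,\delta$, as already established. For Part~1, since $M_{\rm NC}$ minimizes the bottleneck over non-crossing matchings, $bn(M_{\rm NC}) \le bn(M_=) \le 2\sqrt{10}\cdot bn(M_{\rm C})$. For Part~2, the opposite inequality $bn(M_{\rm C}) \le bn(M_{\rm NC})$ holds trivially (every non-crossing matching is a matching), so $M := M_=$ satisfies $bn(M) \le 2\sqrt{10}\cdot bn(M_{\rm NC})$.

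For the running time, I would first invoke the $O(n^{3/2}\log^{1/2} n)$-time bottleneck-matching algorithm of Chang, Tang and Lee~\cite{CTL92} to obtain $M_{\rm C}$, and then execute the conversion. The goal is to show that the conversion itself costs only $O(n \log n)$, so that the total is dominated by the first step. The underlying grid has at most $2n$ non-empty cells (one per point), and per-cell and per-corner incidence lists for the current edges can be initialized in $O(n \log n)$ time.

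To bound Stage~1, the natural monovariant is the number $d$ of d-edges in the current matching. Rules~I--IV each remove at least one d-edge and their replacement edges are all internal or straight, so no new d-edges are ever created; together they execute at most $d_0 \le n$ times. Rule~V removes two s-edges at a time without touching d-edges, contributing $O(n)$ further applications. Using the incidence lists, an applicable instance can be located and the bookkeeping updated in $O(\log n)$ time per step, giving an $O(n\log n)$ total for Stage~1.

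For Stage~2, Lemma~\ref{lem:properties}~(5) ensures that each non-empty cell $C$ has at most four external edges. Sorting $P_C$ by coordinate identifies the extreme points needed for the d-edge endpoints and for Procedure~\ref{alg:Proc1}; the convex region $R$ is an intersection of $O(1)$ halfplanes with $C$, and the leftover set $P_C^I$ is matched left-to-right in sorted order. All of this costs $O(|P_C| \log |P_C|)$ per cell, summing to $O(n \log n)$ overall; Part~2 of Stage~2 adds only $O(n)$ for stitching together external edges. The main obstacle is the Stage~1 accounting: one must verify that Rule~III cases~(c) and~(d), which replace a d-edge by two s-edges, neither reintroduce d-edges nor cascade through other danger zones faster than the $d$-potential shrinks, and that the resulting s-edges do not spawn new Rule~I or Rule~II opportunities at already-processed corners. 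Once these invariants are checked, the $O(n)$ bound on the number of rule applications, and hence the $O(n^{3/2}\log^{1/2}n)$ overall running time, follow.
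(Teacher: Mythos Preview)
Your proposal is correct and follows exactly the paper's own proof: apply the Section~3 conversion to $M_{\rm C}$, use optimality of $M_{\rm NC}$ for Part~1 and the trivial inequality $bn(M_{\rm C})\le bn(M_{\rm NC})$ for Part~2, and invoke Chang--Tang--Lee for the $O(n^{1.5}\log^{0.5}n)$ step while the conversion itself runs in $O(n\log n)$. Your stated ``main obstacle'' is not a real obstacle---Rules~I--IV only ever produce internal or s-edges, so the d-edge count is a genuine monovariant, and Rules~I and~II act only on d-edges so they cannot be reactivated by later s-edges; the paper simply asserts the $O(n\log n)$ bound as ``easy to see'' without the detailed accounting you supply.
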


\begin{proof}
{\it 1.} By applying the algorithm of this section to $M_{\rm C}$, we obtain a non-crossing matching $M$, such that
$bn(M_{\rm NC}) \le bn(M) \le 2\sqrt{10} \cdot bn(M_{\rm C})$.\\
{\it 2.} Compute $M_{\rm C}$ in $O(n^{1.5}\log^{0.5} n)$ time, using the algorithm of Chang et al.~\cite{CTL92}. Then, apply the algorithm of this section to $M_{\rm C}$ to obtain a non-crossing matching $M$, such that $bn(M) \le 2\sqrt{10} \cdot bn(M_{\rm C}) \le 2\sqrt{10} \cdot bn(M_{\rm NC})$. It is easy to see that the time complexity of the latter stage is only $O(n\log n)$.
\end{proof}

{\bf Remarks.}
{\bf 1.} We can improve the approximation ratio to $(1+\sqrt{2})\sqrt{5}$ by reducing the cell size to $1+\sqrt{2}$, and picking the endpoints of the d-edges more carefully.\\
{\bf 2.} There exists a set $P$ of $2n$ points in the plane, for which $bn(M_{\rm NC}) \ge \frac{\sqrt{85}}{8} \cdot bn(M_{\rm C})$; see Figure~\ref{fig:l_bound}.\\
{\bf 3.} It is interesting to note that in the bipartite version, the ratio $bn(M_{\rm NC})/bn(M_{\rm C})$ can be linear in $n$, even if the red and blue points are separated by a line; see Figures~\ref{fig:conc-2}.

\begin{figure}[htp]
    \centering
       \includegraphics[width=.4\textwidth]{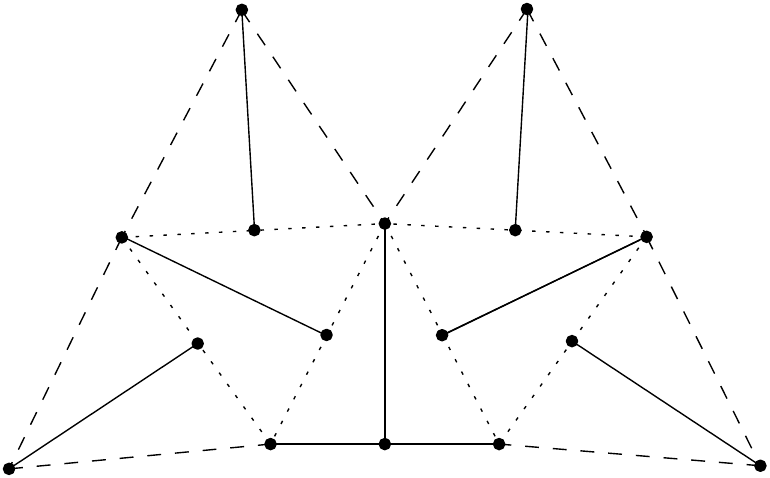}
   	\caption{The length of the solid edges is 1 and the length of the dashed edges is $\sqrt{85}/8$. The bottleneck (crossing) matching consists of the 8 solid edges, while any bottleneck non-crossing matching must use at least one of the dashed edges.}
   	\label{fig:l_bound}
\end{figure}

\begin{figure}[htp]
    \centering
        \includegraphics[width=.77\textwidth]{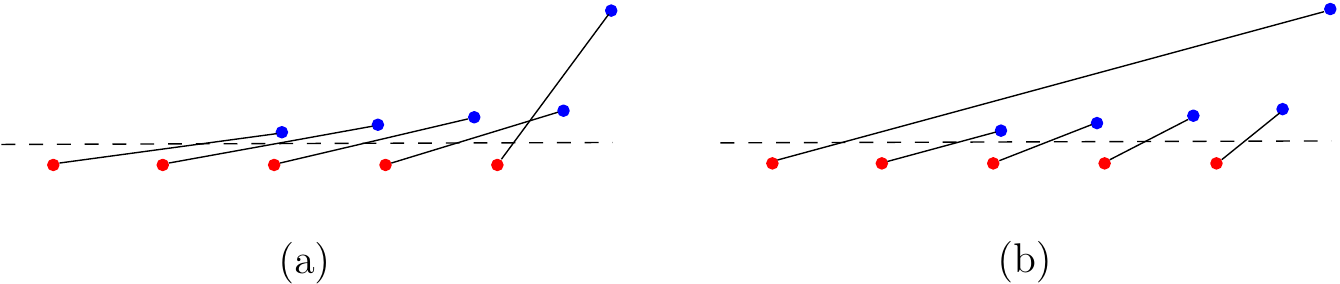}
    \caption{(a) Crossing and (b) non-crossing matching of $n$ blue and $n$ red points separated by a line.}
    \label{fig:conc-2}
\end{figure}

\section{Solving Special Cases}\label{sec:Sec4}
In this section, we consider two special cases of the BNCM problem: when the points of $P$ are in convex position, i.e., the points in $P$ form the vertices of a convex polygon, and when the points are located on a circle.


\subsection{Matching Points in Convex Position}
Let $\{p_1,p_2,\ldots,p_{2n}\}$ denote the vertices of the convex polygon, that is obtained by connecting the points in $P$, ordered in clockwise-order with an arbitrary first point $p_1$; see Figure~\ref{fig:convex-1}. Notice that, a non-crossing perfect matching in $\{p_1,p_2,\ldots,p_{2n}\}$ always exists. Let $M^*$ be an optimal matching of $P$, i.e., a non-crossing matching with minimum bottleneck. We first observe that, for each edge $(p_i,p_j)$ in $M^*$, $i+j$ is odd. According to this observation, we define the following weight function
\[
w_{i,j}=
\begin{cases}
|p_ip_j| &\text{: if $i+j$ is odd;}\\
\infty &\text{: otherwise.}\\
\end{cases}
\]

\begin{figure}[htp]
    \centering
        \includegraphics[width=.50\textwidth]{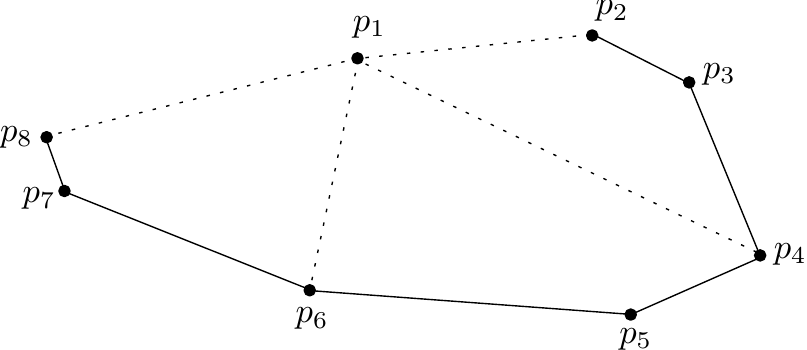}
    \caption{The convex polygon that is obtained from $P$, for $n=4$. $p_1$ can be only matched to the points $p_2, p_4, p_6$ or $p_8$.}
    \label{fig:convex-1}
\end{figure}

Notice also that, for any $1\le i<j\le 2n$, since the points in $\{p_1,p_2,\ldots,p_{2n}\}$ are in convex position, so are the points in $\{p_i,p_{i+1},\ldots,p_{j}\}$. Let $M[i,j]$ denote the bottleneck of an optimal matching between the points in $\{p_i,p_{i+1},\ldots,p_{j}\}$. Since $M^*$ is a perfect matching, each point, particularly $p_1$, is matched in $M^*$. Let $p_k$ (for an even $k$) be the point that is matched to $p_1$ in $M^*$. Hence, the bottleneck of $M^*$ (i.e., $M[1,2n]$) is equal to $\max\{w_{1,k},M[2,k-1],M[k+1,2n]\}$. Thus, in order to compute $M[1,2n]$, we compute $\max\{w_{1,k},M[2,k-1],M[k+1,2n]\}$ for each even $k$ between 2 and $2n$, and we take the minimum over these values. In general, for every $1\le i<j\le 2n$, we have
\[
M[i,j]=\min_{k=i+1,i+3,\ldots,j}
\begin{cases}
w_{i,k} &\text{: if $k=i+1=j$;}\\
\max\{w_{i,k},M[k+1,j]\} &\text{: if $k=i+1$;}\\
\max\{w_{i,k},M[i+1,k-1]\} &\text{: if $k=j$;}\\
\max\{w_{i,k},M[k+1,j],M[i+1,k-1]\} &\text{: otherwise.}
\end{cases}
\]

We can compute $M[1,2n]$ using dynamic programming. The dynamic programming table $M$ has $n$ rows and $n$ columns. Each cell $M[i,j]$ corresponds to a solution of the problem for the set $\{p_i,p_{i+1},\ldots,p_j\}$, and it can be computed by at most $n$ lookups in the table. We fill the table iteratively in such a way insuring that, for each cell $M[i,j]$, all the values needed to compute $M[i,j]$ are already computed. To do that, we first fill the diagonal $M[1,2], M[2,3],\ldots,M[{n-1},n]$, and then we fill the above diagonals one by one. This leads to solve the problem in $O(n^3)$ time and in $O(n^2)$ space.

The following theorem summarizes this result.
\begin{theorem}
Given a set $P$ of $2n$ points in convex position, one can compute a bottleneck non-crossing matching in time $O(n^3)$ and in space $O(n^2)$.
\end{theorem}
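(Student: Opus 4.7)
My proof plan is to justify each of the ingredients already assembled in the text and then assemble them into a clean induction/DP argument.

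First, I would verify the parity observation that in any non-crossing perfect matching of points in convex position, every matched pair $(p_i,p_j)$ satisfies $i+j$ odd. The reason is that the chord $p_ip_j$ splits the remaining $2n-2$ vertices of the polygon into the arcs $\{p_{i+1},\ldots,p_{j-1}\}$ and $\{p_{j+1},\ldots,p_{2n},p_1,\ldots,p_{i-1}\}$, and a non-crossing matching cannot pair a point on one arc with a point on the other. Hence each arc must by itself contain an even number of points, which forces $j-i-1$ to be even and therefore $i+j$ to be odd. This justifies setting $w_{i,j}=\infty$ when $i+j$ is even.

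Next I would formalize optimal substructure. For $1\le i<j\le 2n$ with $j-i+1$ even, let $M[i,j]$ denote the bottleneck of an optimal non-crossing matching of $\{p_i,\ldots,p_j\}$ (these points are themselves in convex position since they are a contiguous subsequence on the polygon). In an optimal such matching, $p_i$ is paired with some $p_k$, $k\in\{i+1,i+3,\ldots,j\}$, and the chord $p_ip_k$ splits the remaining points into $\{p_{i+1},\ldots,p_{k-1}\}$ and $\{p_{k+1},\ldots,p_j\}$, each of even size, which must themselves be matched non-crossingly. Because the bottleneck of a disjoint union of matchings is the maximum of their bottlenecks, the recurrence displayed in the excerpt follows, with the three boundary cases handled exactly when one of the two subproblem intervals is empty. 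Correctness then follows by a straightforward induction on $j-i$: the base case $j=i+1$ is the single edge $w_{i,i+1}$, and in the inductive step we minimize over all legal choices of $k$.

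Finally I would analyze the dynamic program. The table has $O(n^2)$ relevant cells $M[i,j]$ (only those with $j-i+1$ even matter), so space is $O(n^2)$. I would fill the table in order of increasing interval length $j-i$, so that whenever $M[i,j]$ is evaluated, both $M[i+1,k-1]$ and $M[k+1,j]$ for all candidate $k$ have already been computed; each cell is then computed by an $O(n)$ minimization over the $O(n)$ choices of $k$, giving total running time $O(n^3)$. The answer is $M[1,2n]$, and the matching itself can be recovered in $O(n)$ additional time by storing, in each cell, the argmin $k$ and backtracking.

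I do not anticipate a real obstacle here; the only subtle point is the parity observation, and the main routine work is confirming that the three boundary branches of the recurrence (when $k=i+1=j$, $k=i+1$, or $k=j$) cover exactly the degenerate subproblems and agree with the definition of $M$.
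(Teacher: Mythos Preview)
Your proposal is correct and follows essentially the same dynamic-programming approach as the paper: the parity observation, the recurrence on $M[i,j]$ by branching on the partner $p_k$ of $p_i$, and the $O(n^3)$/$O(n^2)$ table-filling analysis are all identical. If anything, you supply more justification than the paper does (it merely states the parity observation and the optimal-substructure claim without proof), so there is nothing to add.
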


\paragraph{Remark.} The same result holds for computing a bottleneck non-crossing matching of two (red and blue) sets of points in convex position, and of a set of points on the boundary of a simple polygon, with the constraint that edges must not leave the polygon.

\subsection{Matching Points on a Circle}
Let $P=\{p_1,p_2,\ldots,p_{2n}\}$ be a set of points located (in clockwise-order) on a boundary of a circle. In this section, we show how to find a bottleneck non-crossing matching of $P$ in time $O(n)$.

Let $M^*$ be an optimal matching of $P$, i.e., a non-crossing matching with minimum bottleneck. We first claim that each edge in $M^*$ connects between two consecutive points from $P$, i.e., if $(p_i,p_j)$ in $M^*$ then $j-i=\pm1$ (assume that $p_1=p_{2n+1}$ and $p_{2n}=p_0$). To see that, assume that there is an edge $(p_i,p_j)$ in $M^*$ that violates the claim, and consider the line that passes through $p_i$ and $p_j$. It divides the circle into two arcs $\wideparen{p_ip_j}$ and $\wideparen{p_jp_i}$. Assume without loss of generality that $\wideparen{p_ip_j}$ is shorter than $\wideparen{p_jp_i}$. Then, it is clear that the distance between any two points on the arc $\wideparen{p_ip_j}$ is at most $p_ip_j$, and, since there is an even number of points on $\wideparen{p_ip_j}$, we can match the points $p_i, p_{i+1}, \cdots, p_{j-1}, p_j$ consecutively without affecting the bottleneck of $M^*$.

Therefore, since there are only two ways to match the points consecutively, we can compute an optimal matching of $P$ in linear-time.
\begin{theorem}
Given a set $P$ of $2n$ points on a circle, one can compute a bottleneck non-crossing matching in time $O(n)$.
\end{theorem}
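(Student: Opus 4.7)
\medskip
\noindent\textbf{Proof proposal.}
The plan is to reduce the problem to a direct comparison between only two candidate matchings, each computable in linear time. First I would formalize the claim already sketched in the excerpt: in any optimal non-crossing matching $M^*$, every edge joins two circularly consecutive points of $P$. Assuming this, the optimal matching is one of exactly two ``consecutive'' matchings,
\[
M_{\text{odd}} = \{(p_1,p_2),(p_3,p_4),\ldots,(p_{2n-1},p_{2n})\}, \qquad
M_{\text{even}} = \{(p_2,p_3),(p_4,p_5),\ldots,(p_{2n},p_1)\},
\]
both of which are clearly non-crossing. Computing $bn(M_{\text{odd}})$ and $bn(M_{\text{even}})$ and returning the cheaper one takes $O(n)$ time, yielding the claimed bound.

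The core of the argument is therefore the exchange step. I would proceed by contradiction: suppose $M^*$ contains an edge $(p_i,p_j)$ with $j-i \pmod{2n}$ not equal to $\pm 1$, chosen so that the shorter open arc $\wideparen{p_ip_j}$ between $p_i$ and $p_j$ contains as few points of $P$ as possible among all ``non-consecutive'' edges of $M^*$. Since $M^*$ is a non-crossing perfect matching, the chord $p_ip_j$ separates the points on $\wideparen{p_ip_j}$ from the rest, and all of these points must be matched among themselves in $M^*$; in particular their number is even. By the minimality choice, every edge of $M^*$ with both endpoints on $\wideparen{p_ip_j}$ is already consecutive, so the induced matching on $\wideparen{p_ip_j}$ is forced to be the ``pair adjacent points'' matching on that arc.

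The geometric step is then to show that replacing $(p_i,p_j)$ together with this induced consecutive matching on $\wideparen{p_ip_j}$ by the full consecutive matching $(p_i,p_{i+1}),(p_{i+2},p_{i+3}),\ldots,(p_{j-1},p_j)$ along the arc does not increase the bottleneck. This boils down to the elementary fact (the main, though easy, obstacle) that for any two points $p_a,p_b$ lying on an arc $\wideparen{p_ip_j}$ of a circle with $\wideparen{p_ip_j}$ being the shorter of the two arcs determined by $p_i,p_j$, we have $|p_ap_b|\le |p_ip_j|$. Intuitively, the chord $p_ap_b$ subtends a central angle no larger than that of $p_ip_j$; I would prove this by noting that both chords are monotone functions of their subtended central angles on $[0,\pi]$, and that the central angle of $p_ap_b$ is at most that of $p_ip_j$ since $p_a,p_b$ lie on the shorter arc. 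Consequently every new edge has length at most $|p_ip_j|\le bn(M^*)$, so the new matching is still non-crossing (its edges are disjoint chords along the arc plus the untouched portion outside) and has bottleneck no larger than $bn(M^*)$, contradicting the non-consecutiveness assumption in $M^*$.

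Once the structural claim is established, the algorithm is immediate: iterate once around the circle computing $bn(M_{\text{odd}})$ and $bn(M_{\text{even}})$, and output the matching with the smaller bottleneck, in total $O(n)$ time. The only subtle point to double-check is the edge case where $\wideparen{p_ip_j}$ and $\wideparen{p_jp_i}$ have equal length, which causes no problem because either arc can then play the role of the shorter one in the exchange.
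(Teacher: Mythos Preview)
Your proposal is correct and follows essentially the same route as the paper: an exchange argument on the shorter arc determined by a non-consecutive edge shows that some optimal non-crossing matching consists only of circularly consecutive pairs, after which comparing $M_{\text{odd}}$ and $M_{\text{even}}$ in $O(n)$ time finishes the proof. Your write-up is a bit more careful than the paper's (the minimal-arc choice to pin down the induced matching on the arc, and the explicit central-angle justification for the chord inequality), but the underlying idea is identical.
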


\old{

\section{Concluding Remarks}\label{sec:Sec5}

We studied the problem of finding a bottleneck non-crossing matching of a set of $2n$ points in the plane. We showed that the problem cannot be approximated within a factor of $3\sqrt{2}/4$, and we gave a polynomial-time approximation algorithm with approximation ratio $2\sqrt{5}$. In addition, we showed that the problem can be solved in $O(n^3)$ time when the points in convex position, and in $O(n)$ time when the points on a circle. It will be interesting to achieve an approximation ratio better than $2\sqrt{5}$ and/or to improve the running time of exact algorithm for the convex position case.

During this work, we considered the bipartite version of the problem, i.e., matching $n$ blue points to $n$ red points. Unlike the general version, we observed that the ratio between bottleneck of the crossing and the non-crossing matchings can be linear as shown in Figure~\ref{fig:conc-1}, even in the restricted case when the blue and the red points are separated by a line, as shown in Figure~\ref{fig:conc-2}. This implies that our algorithm for the general version can not be extended to give a ``good'' result for this version.
\begin{figure}[htp]
    \centering
        \includegraphics[width=.89\textwidth]{conc-1}
    \caption{(a) Crossing and (b) non-crossing matchings of $n$ blue and $n$ red points.}
    \label{fig:conc-1}
\end{figure}
\begin{figure}[htp]
    \centering
        \includegraphics[width=.89\textwidth]{conc-2}
    \caption{The restricted case: (a) Crossing and (b) non-crossing matchings of $n$ blue and $n$ red points separated by a line.}
    \label{fig:conc-2}
\end{figure}

It is clear that the hardness proof of the general version holds also for the bipartite version but not for the restricted case. Interesting questions are to provide an approximation algorithm for this version and to prove hardness of the restricted case.
}

\bibliographystyle{plain}
\bibliography{ref}

\end{document}